\font\tenopen=msbm10
\font\sevenopen=msbm7
\font\fiveopen=msbm5
\def\open{\fam\openfam\tenopen}
\font\title=cmbx10 scaled\magstep1
\def\R{{\open \mathbb{R}}}
\def\I{{\open \mathbb{I}}}
\theoremstyle{remark} \theoremstyle{lemma} \theoremstyle{definition}
\theoremstyle{corol} \theoremstyle{proposition}
\theoremstyle{condition} \theoremstyle{conjecture}
\newtheorem{theorem}{\bf{Theorem}}
\newtheorem{lemma}{\bf{Lemma}}
\newtheorem{example}{\bf{Example}}
\newtheorem{proposition}{\bf{Proposition}}
\newcommand{\Rmnum}[1]{\expandafter\@slowromancap\romannumeral #1@}
\begin{document}
	
    \centerline{\bf{Universally Optimal Designs for the Two-dimensional Interference Model}}

\begin{center}
\begin{tabular}{cc}
\begin{tabular}{c}
    A. S. Hedayat\\
	Department of Math, Stat, and CS\\
	University of Illinois at Chicago \\
\end{tabular}	
	\bigskip
\begin{tabular}{c}	
	Heng Xu and Wei Zheng\\
	Department of Mathematical Science\\
	Indiana University-Purdue University Indianapolis
\end{tabular}
\end{tabular}
\end{center}

\bigskip

\abstract{There have been some major advances in the theory of optimal designs for interference models. However, the majority of them focus on one-dimensional layout of the block and the study for two-dimensional interference model is quite limited partly due to technical difficulties. This paper tries to fill this gap. Specifically, it systematically characterizes all possible universally optimal designs simultaneously. Computational issues are also addressed with theoretical backup.}
	
	\bigskip
	
	\centerline{\today}
	
	\bigskip
	
	\pagenumbering{arabic}
	
	\section{Introduction}

It is not uncommon in the application of block designs that a treatment assigned to a particular plot could have the so called neighbor or side effects on the neighboring plots. In avoiding systematic bias caused by these side effects, the interference model has gained its popularity in data analysis. Correspondingly, the optimal or efficient designs have been studied by Gill (1993), Druilhet (1999), Kunert and Martin (2000), Filipiak and Markiewicz (2003, 2005, 2007), Bailey and Druilhet (2004), Ai et al. (2007), Ai et al. (2009), Kunert and Mersmann (2011), Druilhet and Tinsson (2012) and Filipiak (2012), Li, Zheng and Ai (2015), Zheng (2015) and Zheng, Ai and Li (2017) among others. However, they all assumed the block to be in one-dimensional layout so that the side effects is only contributed by left and right neighbors. Not infrequently, many practical applications enforces the layout of blocks to be two dimensional so that the side effect applies to all four directions. See Langton (1990), Federer and Basford (1991), Morgan and Uddin (1991) and Williams, John and Whitaker (2006) for examples. This paper provides tools for characterizing optimal designs for a two-dimensional interference model. 

For the one-dimensional interference model, a design is essentially a collection of sequences of treatments. Similarly, a design for the two-dimensional interference model consists of many two-dimensional arrays. However, the change of dimension complicates the problem of finding optimal designs tremendously. As a result, the relevant study of optimal or efficient designs is quite limited. Langton (1990) proposed neighbour balanced Latin square without referring to a specific model. Federer and Basford (1991) constructed and compared three types of row-column designs with consideration of the side effects. Morgan and Uddin (1991) studied optimal designs at the presence of a particular spatial correlation structure without interference effects in the mean model. The latter work was followed by Uddin and Morgan (1997a, 1997b) and Morgan and Uddin (1999). We shall establish optimality conditions for the interference model for any spatial correlation structure. 

The paper is organized as follows. Section \ref{sec:notation} introduces the notations and formulates the problem. Section \ref{sec: sym} proposes a complete class and derives a necessary and sufficient condition for a design within it to be universally optimal. The condition leads to an explicit way of deriving the optimal or efficient designs. This section also provides some preliminary results useful for the proof of theorems in other sections. Section \ref{sec: linear} establishes a necessary and sufficient condition for an arbitrary design to be universally optimal. Section \ref{sec: seq} derives theoretical results regarding the supporting set of block arrays. This shrinks the pool of feasible designs and saves the computational cost tremendously. Section \ref{sec:example} provides some examples of optimal or efficient designs for various situations.

\section{Notations and formulations}\label{sec:notation}
	
    Consider a field experiment with $t$ treatments and $n$ blocks, and each block has $a$ rows and $b$ columns. Without loss of generality we assume $a \leq b$ since it doesn't change the mathematical form of the problem by switching the roles of row and column. The response at the $i$th row and $j$th column of block $k$ can be modeled as:
	\begin{eqnarray}
	y_{{ijk}}=\mu+\beta_k+\tau_{d(i,j,k)}+\gamma_{d(i-1,j,k)}+\gamma_{d(i+1,j,k)}+\gamma_{d(i,j-1,k)}+\gamma_{d(i,j+1,k)}+\varepsilon_{ijk},\label{eqn:800}
	\end{eqnarray}
	where the error term $\varepsilon_{ijk}$ has mean zero. The subscript $d(i,j,k)$ denotes the treatment assigned to position $(i,j)$ of block $k$ by the design $d:\{1,2,\cdots,a\}\times\{1,2,\cdots,b\}\times\{1,2,\cdots,n\}\to \{1,2,\cdots,t\}$. Here, $\mu$ is the average mean, $\beta_k$ is the block effect, $\tau_{d(i,j,k)}$ is the direct effect of treatment $d(i,j,k)$. Similarly, $\gamma_{d(i-1,j,k)},$ $\gamma_{d(i+1,j,k)},$ $\gamma_{d(i,j-1,k)},$ and $\gamma_{d(i,j+1,k)}$ are the side effects of treatments $d(i-1,j,k), d(i+1,j,k), d(i,j-1,k),$ and $d(i,j+1,k)$ from below, above, left and right plots, respectively. Here, we assume the side effect depends on the treatment only and does not depend on the direction. Suppose $Y_d$ is the vector of $y_{{ijk}}$ ordered by colexicographical order, then Model (\ref{eqn:800}) can be written in the matrix form of
	\begin{eqnarray}
	Y_d &=&1\mu+U\beta+T_d^{0}\tau+F_d\gamma+\varepsilon, \label{eqn:801}\\
	F_d &=& T_d^{1}+T_d^{2}+T_d^{3}+T_d^{4}\nonumber
	\end{eqnarray}
	where $\beta=(\beta_1,\cdots,\beta_n)^{'}$, $\tau=(\tau_1,\cdots,\tau_t)^{'}$, $\gamma=(\gamma_1,\cdots,\gamma_t)^{'}$ and $U=I_n\otimes 1_{p}$ with $p=a b$. Here $\otimes$ represents the Kronecker product, $1_{p}$ represents a vector of ones with length $p$, $I_n$ represents the identify matrix of size $n$ and $'$ means the transpose of a vector or a matrix. Also, $T_d^{0}$ and $T_d^{h}, 1\leq h\leq 4$, are the design matrices for the direct effect as well as the side effects from left, right, above, and below directions, respectively. We assume there is no guard plots or edge effects, i.e. $\gamma_{d(0,j,h)}=\gamma_{d(a+1,j,h)}=\gamma_{d(i,0,h)}=\gamma_{d(i,b+1,h)}=0$. Since the observations in $Y_d$ are organized by the colexicographical order, we have the decomposition $T_{d}^{i}=({T^{i}_1}',{T^{i}_2}',...,{T^{i}_n}')'$, $0\leq i\leq 4$, where $T^{i}_h$, $1\leq h\leq n$, is the incidence matrix of side effect treatment from each direction and block $h$. Further, we have $T_h^{1}=(I_b\otimes K_a)T_h^{0}$,
	$T_h^{2}=(I_b\otimes K_{a}^{'})T_h^{0}$,
	$T_h^{3}=(K_b\otimes I_a)T_h^{0}$,
	$T_h^{4}=(K_{b}^{'}\otimes I_a)T_h^{0}$, where $K_h=( \I_{[i-j=1]} )_{1\leq i, j \leq h}$ with $\I$ being the indicator function. That is, $F_d=MT_d^{0}$ with $M=I_n\otimes [(K_b\otimes I_a)+(K_{b}^{'}\otimes I_a)+(I_b\otimes K_a)+(I_b\otimes K_{a}^{'}))]$. 
	
	Regarding the dependence structure of the observations, we only adopt the very mild assumption $Var(\varepsilon)=I_n\otimes \Sigma$, where $\Sigma$ is a positive definite within-block covariance matrix. By Kunert (1984), the information matrix for $\tau$ is
	\begin{eqnarray}
	C_d&=&C_{d00}-C_{01}C_{d11}^{-}C_{d10},\label{eqn:802}
	\end{eqnarray}
	where $C_{d00}=T_{d}^{0'}(I_{n}\otimes \widetilde{B})T_d^{0}$, $C_{d01}=C_{d10}^{'}=T_{d}^{0'}(I_{n}\otimes \widetilde{B})F_d$, $C_{d11}=F_{d}^{'}(I_{n}\otimes \widetilde{B})F_d$ and 	$\widetilde{B} =\Sigma^{-1}-\Sigma^{-1}J_{p}\Sigma^{-1}(1_{p}^{'}\Sigma^{-1}1_{p})^{-1}$. The information matrix $C_d$ depends on the covariance matrix $\Sigma$ through the symmetric matrix $\tilde{B}$, whose row sum is zero. For the special case of $\Sigma=I_p$, we have the simplification of $\tilde{B}=B_p$, where $B_p:=I_p-p^{-1}J_p$. Kushner (1997) pointed out that when $\Sigma$ is of $type$-$H$, i.e. $\Sigma=xI_p+y1_p'+1_py'$ with $x\in \R$ and $y\in \R^k$, we have $\tilde{B}=B_p/x.$ Hence the choices of designs agree with that for $\Sigma=I_p$. This special case will be particularly dealt with in Section \ref{sec: seq}. We allow $\Sigma$ to be an arbitrary covariance matrix throughout the rest of the paper.

	To save the space, we represent a block array in the format $(t(\cdot,1);t(\cdot,2);...;t(\cdot,b))$, where $t(\cdot,j)=\{t(1,j),t(2,j),...,t(a,j)\}$ is the collection of treatments from the $j$th column of the block and $t(i,j)\in \{1,2,3,...,t\}$ is the treatment assigned to the $i$th row and $j$th column of the block. Hence, a design can be viewed as a result of selecting $n$ elements with replacement from ${\cal S}$, the set of all possible $t^{p}$ arrays. For an array $s\in\mathcal{S}$, let $n_s$ be the number of its replications in the design $d$ and $p_s=n_s/n$ be the proportion of it. When $n$ is fixed, a design is determined by the $measure$ $\xi=\{p_s,s\in{\cal S}\}\in {\cal P}$, where ${\cal P}=\{\xi|\sum_{s\in{\cal S}}p_s=1$, $p_s\geq 0\}$. Implicitely, we have dropped the requirement that $np_s$ has to be an integer for all $s\in {\cal S}$. This relaxation allows us to solve the optimization problem through calculus tools. Essentially, for any design, its associated measure shall be in the space of ${\cal P}$. The derived solutions not only provides a benchmark for measuring the efficiency of any exact design, but also guides us to derive optimal or efficient designs.
	
	Now we shall demonstrate that the search for optimal design can be approached by searching for optimal measure. For $0\leq i, j \leq1$, let $C_{sij}$ be the degenerated matrix of $C_{dij}$ when design $d$ consists of a single array $s$. Note that matrices $C_{dij}$, $0\leq i,j\leq 1$ are additive in the blocks, namely $C_{dij}=\sum^n_{h=1}C_{hij}$ with $C_{hij}={T_h^{(i)}}'\tilde{B}T_h^{(j)}$. Suppose $\xi=\{p_s, s\in {\cal S}\}$ is the measure associated with the design $d$, then we have $C_{dij}=nC_{\xi ij}$, where $C_{\xi ij}=\sum_{s\in{\cal S}}p_sC_{sij}$. As a result, we have
	\begin{eqnarray}
	C_d&=&nC_{\xi}\label{eqn:512}\\
	C_{\xi}&=&C_{\xi 00}-C_{\xi 01}C_{\xi 11}^{-}C_{\xi 10}, \label{eqn:measure}
	\end{eqnarray}
	
	Equation (\ref{eqn:512}) shows that the maximization of $C_d$ can be achieved by maximizing $C_{\xi}$. Follwing Kiefer (1975), a measure $\xi$ is said to be $universally$ $optimal$ if it maximizes $\Phi(C_{\xi})$ for any $\Phi$ satisfying  the following three conditions. \\
	$(C.1)$ $\Phi$ is concave.\\
	$(C.2)$ $\Phi$ is nondecreasing.\\
	$(C.3)$ $\Phi(S^{'}CS)=\Phi(C)$ for any permutation matrix $S$.\\

	\section{The complete class}\label{sec: sym}
	In approximate design theory, one powerful tool is the complete class introduced in the seminal book by Karlin and Studden (1966), on Chebyshev systems. It tries to identify a subset of simple structured designs which at the same time contains the optimal design. As a result, we can easily find the optimal design within this complete class. Some general related theories have been developed in a series of papers by Yang and Stufken (2009), Yang (2010), Dette and Melas (2011), Yang and Stufken (2012) and Dette and Schorning (2013). Unfortunately, the methodologies based on Chebyshev system does not apply here since the design point is multi-dimensional and constrained within a discrete domain. However, the symmetrization idea adopted by Kushner (1997) in his study of optimal crossover design applies to our problem.
	
	Let ${\cal G}$ be the set of all $t!$ permutations on symbols $\{1,2,...,t\}$. For permutation $\sigma\in {\cal G}$ and array $s$, we define $\sigma s$ to be the array derived by applying the permutation $\sigma$ to each element of $s$, that is, the $(i,j)$th element of $\sigma s$ is $\sigma [t(i,j)]$. We call a measure to be {\it symmetric} if $p_s=p_{\sigma s}$ for all $s\in {\cal S}$ and $\sigma\in {\cal G}$. For array $s$, denote by $\langle s\rangle=\{\sigma s:\sigma\in {\cal G}\}$ the {\it symmetric block set (SBS)} generated by $s$. As ${\cal G}$ assembles a group in abstract algebra, we have the partition ${\cal S}=\cup^m_{i=1}\langle s_i\rangle$, where $m$ is the number of distinct SBS's which partition ${\cal S}$. Let $p_{\langle s_i\rangle}=\sum_{s\in \langle s_i\rangle}p_s$ be the {\it SBS proportion} and $|\langle s_i\rangle|$ be the cardinality of $\langle s_i\rangle$. Then, for a symmetric measure, we shall have
	\begin{eqnarray}\label{eqn:0325}
	p_s=p_{\langle s_i\rangle}/|\langle s_i\rangle|&for& s\in \langle s_i\rangle, 1\leq i\leq m.
	\end{eqnarray}
	That is, the SBS proportion $p_{\langle s_i\rangle}$ is evenly allocated to each array in the corresponding SBS. 
	
	
	\begin{lemma}\label{lemma:0223}
	There exists a symmetric measure which is universally optimal among ${\cal P}$. 
	\end{lemma}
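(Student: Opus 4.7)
The plan is to use the classical symmetrization argument of Kiefer (1975) / Kushner (1997): given any measure $\xi\in\mathcal{P}$, form its average $\bar\xi$ over the group $\mathcal{G}$ and show that $\bar\xi$ is no worse than $\xi$ with respect to every $\Phi$ satisfying (C.1)--(C.3). Concretely, for $\sigma\in\mathcal{G}$, define the pulled-back measure $\sigma\xi$ by $(\sigma\xi)_s=p_{\sigma^{-1}s}$, and then set
\begin{eqnarray*}
\bar\xi \;=\; \frac{1}{|\mathcal{G}|}\sum_{\sigma\in\mathcal{G}}\sigma\xi,\qquad\text{i.e.,}\qquad \bar p_s=\frac{1}{|\mathcal{G}|}\sum_{\sigma\in\mathcal{G}}p_{\sigma^{-1}s}.
\end{eqnarray*}
Clearly $\bar\xi\in\mathcal{P}$ and $\bar p_s=\bar p_{\sigma s}$ for all $\sigma$, so $\bar\xi$ is symmetric in the sense of (\ref{eqn:0325}).

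The first substantive step is an equivariance identity at the single-array level: if $P_\sigma$ is the $t\times t$ permutation matrix induced by $\sigma$, then permuting treatment labels re-labels the columns of the incidence matrices, i.e.\ $T^{(i)}_{\sigma s}=T^{(i)}_s P_\sigma$ for $0\le i\le 4$. Substituting into the definitions of $C_{sij}$ gives $C_{\sigma s,ij}=P_\sigma' C_{s,ij}P_\sigma$ for $i,j\in\{0,1\}$, and hence, by the linearity of $C_{\xi ij}$ in $\xi$,
\begin{eqnarray*}
C_{\sigma\xi,ij}\;=\;P_\sigma'\,C_{\xi ij}\,P_\sigma,\qquad i,j\in\{0,1\}.
\end{eqnarray*}
Feeding this through the Schur complement in (\ref{eqn:measure}) and using the invariance of the generalized inverse under orthogonal conjugation yields $C_{\sigma\xi}=P_\sigma' C_\xi P_\sigma$.

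The next step combines this with the well-known concavity of the information-matrix map $\xi\mapsto C_\xi$ in the Loewner order (a standard property of Schur complements, which I would verify by writing $C_\xi$ as an infimum of linear functions of $\xi$). Together with (C.1)--(C.3) this gives
\begin{eqnarray*}
\Phi(C_{\bar\xi})\;\ge\;\Phi\!\left(\frac{1}{|\mathcal{G}|}\sum_{\sigma\in\mathcal{G}}C_{\sigma\xi}\right)\;\ge\;\frac{1}{|\mathcal{G}|}\sum_{\sigma\in\mathcal{G}}\Phi(C_{\sigma\xi})\;=\;\frac{1}{|\mathcal{G}|}\sum_{\sigma\in\mathcal{G}}\Phi(P_\sigma'C_\xi P_\sigma)\;=\;\Phi(C_\xi),
\end{eqnarray*}
where the first inequality uses Loewner concavity together with (C.2), the second uses (C.1), and the last equality uses (C.3). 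Finally, since $\mathcal{P}$ is a compact simplex and $\Phi\circ C$ is upper semi-continuous, a maximizer $\xi^*\in\mathcal{P}$ exists for each $\Phi$; but the above inequality shows that its symmetrization $\overline{\xi^*}$ achieves the same value of $\Phi$ and is symmetric. To obtain a single measure optimal for \emph{all} admissible $\Phi$ simultaneously one invokes the standard fact (Kiefer 1975) that universal optimality reduces to maximizing $\mathrm{tr}(C_\xi)$ subject to $C_\xi$ being completely symmetric, so it suffices to apply the argument to any one such $\Phi$ and check that $C_{\bar\xi}$ is completely symmetric.

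The main obstacle I anticipate is the equivariance step $C_{\sigma s,ij}=P_\sigma'C_{s,ij}P_\sigma$: it is where the two-dimensional block structure could, in principle, interact nontrivially with the permutation of treatment labels. Once one observes that $\sigma$ acts only on treatment indices while $\widetilde B$ and $M$ act only on plot positions, the two actions commute and the identity follows, but this clean separation must be stated carefully since $F_d=MT^0_d$ couples the two. Everything else is boilerplate once that identity is in hand.
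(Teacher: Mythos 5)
Your proposal is correct and follows essentially the same route as the paper: average $\xi$ over the permutation group ${\cal G}$, use Loewner concavity of the Schur complement to get $C_{\bar\xi}\geq \frac{1}{t!}\sum_{\sigma}C_{\xi_\sigma}$, and then apply (C.1)--(C.3) together with the equivariance $C_{\xi_\sigma}=P_\sigma'C_\xi P_\sigma$. You are merely more explicit than the paper about the single-array equivariance identity and about closing the existence gap via compactness and Kiefer's complete-symmetry criterion, both of which the paper leaves implicit.
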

	
	\begin{proof}
	For any measure $\xi=\{p_s,s\in{\cal S}\}\in {\cal P}$ and permutation $\sigma\in {\cal G}$, let $\xi_{\sigma}=\{p_{\sigma^{-1}s},s\in{\cal S}\}$
    and $\xi^{*}=(\sum_{\sigma\in {\cal G}}\xi_{\sigma})/t!$. $\xi^{*}$ satisfies (\ref{eqn:0325}). 
    
    Since $C_{\xi^{*}ij}=\sum_{\sigma\in {\cal G}}(C_{\xi_{\sigma}ij})/t!$, $0 \leq i, j \leq 1$, by the concaveness of Schur's complement, we have $C_{\xi^{*}}\geq \sum_{\sigma\in {\cal G}}C_{\xi_{\sigma}}/t!$, which together with conditions (C.1)-(C.3) yield $\Phi(C_{\xi^{*}})\geq \Phi(C_{\xi})$.
	\end{proof}

	Lemma \ref{lemma:0223} has identified the collection of all symmetric measures to be a complete class. Next, we will show the information matrix of a symmetric measure is of a very simple format so that the maximization of it becomes tractable. Let $c_{\xi ij}=tr(B_tC_{\xi ij})$, $0 \leq i,j \leq 1$, so that we have $c_{\xi ij}=\sum_{s\in {\cal S}} p_sc_{sij}$, where $c_{sij}=tr(B_tC_{sij})$. For a symmetric measure $\xi$, one can verify that $C_{\xi ij}$, $0\leq i, j\leq1$ is completely symmetric and $C_{\xi00}$, $C_{\xi01}$ have zero column sums. Hence we have
	\begin{eqnarray}
	C_{\xi ij}=c_{\xi ij}B_t/(t-1)+\mathbb{I}_{[i=j=1]}1_{t}^{'}C_{\xi ij}1_{t}J_t/t^2,\label{eqn:803}
	\end{eqnarray}
	Regarding this representation, we have
	\begin{lemma}\label{positive}
	For any array $s\in {\cal S}$, we have $c_{s11} > 0$ and $1_{t}^{'}C_{s11}1_{t}>0$.
	\end{lemma}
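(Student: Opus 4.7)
The plan is to reduce both inequalities to quadratic forms in $\tilde B$ and then exploit the fact that $\tilde B$ is positive semi-definite with null space exactly $\mathrm{span}(1_p)$; this property follows from applying the Cauchy--Schwarz inequality in the $\Sigma^{-1}$-inner product to the defining expression of $\tilde B$.

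The pivotal observation is that every plot is assigned exactly one treatment, so $T_s^0 1_t = 1_p$, and therefore $F_s 1_t = M_0 1_p =: w$, where $M_0$ is the single-block adjacency operator and $w$ is the vector whose $l$th entry counts the number of orthogonal neighbours of plot $l$ in the $a\times b$ grid. Crucially, $w$ does not depend on the array $s$. Substituting yields
\[
1_t' C_{s11} 1_t \;=\; (F_s 1_t)' \tilde B (F_s 1_t) \;=\; w' \tilde B w,
\]
which is strictly positive as soon as $w \notin \mathrm{span}(1_p)$. In any non-degenerate block, corner plots have strictly fewer neighbours than edge or interior plots, so $w$ is non-constant and the second inequality follows.

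For $c_{s11}$, the standard variance decomposition gives
\[
c_{s11} \;=\; \mathrm{tr}\bigl(B_t F_s' \tilde B F_s\bigr) \;=\; \sum_{r=1}^{t} (f_r - \bar f)' \tilde B (f_r - \bar f),
\]
where $f_r$ is the $r$th column of $F_s$ and $\bar f = w/t$. This sum is non-negative, and vanishes only if every $f_r - \bar f$ lies in $\mathrm{span}(1_p)$, that is, $f_r = \bar f + d_r 1_p$ for scalars $d_r$. Since $(f_r)_l = w_l/t + d_r$ must be a non-negative integer for every $l$, all entries of $w$ would have to be congruent modulo $t$. In a non-degenerate block $w$ takes at least two integer values differing by $1$, which rules out the equality case for $t \geq 2$ and gives $c_{s11} > 0$.

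The main obstacle will be the integrality argument in the equality case of $c_{s11}$, together with making the (mild) non-degeneracy hypothesis on $(a,b)$ explicit so that $w$ is genuinely non-constant; once those are in place, both inequalities reduce to a single application of the positive semi-definiteness and null-space description of $\tilde B$.
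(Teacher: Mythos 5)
Your proposal is correct, and on the first inequality it is actually more careful than the paper's own argument. The paper treats both claims through a single reduction: it asserts that $c_{s11}=0$ forces $C_{s11}=F_s'\widetilde{B}F_s=0$, hence $\widetilde{B}F_s=0$ and $F_s1_t=v1_p$, and then invokes the impossibility of $F_s1_t=v1_p$ — the same fact you use for the second inequality. But $c_{s11}=\mathrm{tr}(B_tC_{s11})=0$ with $C_{s11}$ nonnegative definite only yields $B_tC_{s11}B_t=0$, i.e.\ $C_{s11}=\lambda J_t$; the correct equality condition is exactly the one you isolate, namely that every centred column $f_r-\bar{f}$ lies in the null space of $\widetilde{B}$, which is strictly weaker than $\widetilde{B}F_s=0$. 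Your variance decomposition of $\mathrm{tr}(B_tF_s'\widetilde{B}F_s)$ together with the integrality argument (all $w_l$ congruent modulo $t$, while $w$ takes consecutive integer values) is what actually closes that case, so your route supplies a step the paper elides; the price is that you need the extra arithmetic observation, whereas the paper's (gappy) argument needs only the non-constancy of $w$. The caveat you flag is real and is shared by the paper: for $a=b=2$ every plot is a corner, $w=2\cdot 1_p$ is constant, so $1_t'C_{s11}1_t=(F_s1_t)'\widetilde{B}(F_s1_t)=0$ for every array (and $c_{s11}=0$ for the constant array); the lemma therefore implicitly requires $b\ge 3$, i.e.\ a genuinely non-constant neighbour-count vector, and the paper's bare assertion that ``$F_s1_t=1_pv$ is not possible by the structure of $F_s$'' silently assumes this. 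Making that hypothesis explicit, as you propose, is the right thing to do.
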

	\begin{proof}
	From its definition, $C_{s11}$ is non-negative definite. If  $c_{s11}=0$, we will have $C_{s11}=F_{s}^{'}\widetilde{B}F_s=0$, which implies $\widetilde{B}F_s=0$ and thus $F_s1_{t}=1_{p}v$ where $v$ is a scaler.  Similarly, if $1_{t}^{'}C_{s11}1_{t}=0$, we would also have $F_s1_{t}=1_{p}v$. However, $F_s1_{t}=1_{p}v$ is not possible by the structure of $F_s$. 
	\end{proof}

	By Lemma \ref{positive}, we have $c_{\xi11} > 0$ and $1_{t}^{'}C_{\xi11}1_{t}>0$ for any measure due to the linearity relationship, and hence $C_{\xi11}$ is positive definite for any symmetric measure.
	Denote $q_{\xi}^{*}=(c_{\xi00}-c_{\xi01}^{2}/c_{\xi11})$. By direct calculations, we have
	\begin{eqnarray}
	C_{\xi}(\tau)&=&q_{\xi}^{*}B_t/(t-1),\label{eqn:804}
	\end{eqnarray}
	for any symmetric measure $\xi$ in view of (\ref{eqn:measure}). Recall the partition ${\cal S}=\cup^m_{k=1}\langle s_k\rangle$ and note that $c_{sij}$ is the same for arrays from the same SBS. We shall have the representation: $c_{\xi ij}=\sum^m_{k=1} p_{\langle s_k\rangle}c_{s_kij}$, $0 \leq i,j \leq 1$. All these together with (\ref{eqn:804}) leads to a convenient way of constructing a universally optimal design: Find the proper SBS proportion $p_{\langle s_k\rangle}, 1\leq k\leq m$, so as to maximize $q_{\xi}^{*}$ and then allocate the SBS proportion uniformly to each individual sequence within the SBS. 
	
	In fact, we shall be able to enlarge the complete class to all measures for which the matrix $C_{\xi ij}$, $0 \leq i,j \leq 1$, is completely symmetric. We call such measure as {\it pseudo symmetric} in order to distinguish it from the already defined notion of symmetric measures. In fact, one can easily verify that a symmetric measure is always pseudo symmetric and also (\ref{eqn:804}) holds for all pseudo symmetric measures. Given the optimal SBS proportions, its associated pseudo symmetric measure should also be universally optimal design among all measures in ${\cal P}$. The following proposition provides more details of what we have concluded so far.
	\begin{proposition}\label{prop:optimal}
		Let $y^{*}=max_{\xi \in {\cal P}}q_{\xi}^{*}$. $(i)$ A pseudo symmetric measure is universally optimal if and only if $q_{\xi}^{*}=y^{*}$. $(ii)$ Recall $B_t=I_t-J_t/t$. A measure is universally optimal if and only if $C_{\xi}=y^{*}B_t/(t-1)$.
	\end{proposition}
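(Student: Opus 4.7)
The plan is to prove (i) first, which reduces universal optimality within the pseudo symmetric class to maximization of the scalar $q_\xi^*$, and then bootstrap to the general statement (ii). Throughout, the central identity is (\ref{eqn:804}), $C_\xi = q_\xi^* B_t/(t-1)$, valid for every pseudo symmetric $\xi$.

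For (i) ``if'', I would take a pseudo symmetric $\xi$ with $q_\xi^* = y^*$ and an arbitrary $\eta\in{\cal P}$. Lemma \ref{lemma:0223} delivers a symmetric $\eta^*$ with $\Phi(C_{\eta^*})\ge\Phi(C_\eta)$; since $\eta^*$ is pseudo symmetric, (\ref{eqn:804}) gives $C_{\eta^*}=q_{\eta^*}^* B_t/(t-1)\le y^* B_t/(t-1)=C_\xi$ in Loewner order, and (C.2) finishes it. For (i) ``only if'', I first note that $c_{\xi ij}=\mathrm{tr}(B_t C_{\xi ij})$ is permutation invariant, so $q_\xi^*$ is preserved by the symmetrization of Lemma \ref{lemma:0223}; hence there exists a pseudo symmetric $\xi'$ with $q_{\xi'}^*=y^*$. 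Specializing the universality of $\xi$ to $\Phi=\mathrm{tr}$ (which satisfies (C.1)--(C.3)) gives $q_\xi^*=\mathrm{tr}(C_\xi)\ge\mathrm{tr}(C_{\xi'})=y^*$, and maximality forces equality.

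Part (ii) ``if'' is immediate: when $C_\xi=y^* B_t/(t-1)$, the information matrix of $\xi$ coincides with that of the optimal pseudo symmetric measure from (i), so $\xi$ inherits its universal optimality. For (ii) ``only if'', let $\xi$ be universally optimal, set $\bar C_\xi=\tfrac{1}{t!}\sum_{\sigma\in{\cal G}} S_\sigma' C_\xi S_\sigma$ where $S_\sigma$ is the permutation matrix of $\sigma$, and let $\xi^*$ be the symmetrization from Lemma \ref{lemma:0223}. Unpacking that lemma's proof gives the chain
\begin{equation*}
\Phi(C_{\xi^*})\;\ge\;\Phi(\bar C_\xi)\;\ge\;\tfrac{1}{t!}\sum_{\sigma\in{\cal G}}\Phi(S_\sigma' C_\xi S_\sigma)\;=\;\Phi(C_\xi),
\end{equation*}
where the first step uses (C.2) together with $C_{\xi^*}\ge\bar C_\xi$ (concaveness of Schur's complement) and the second uses (C.1) and (C.3). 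Universality forces equality throughout for every admissible $\Phi$. Choosing $\Phi=\mathrm{tr}$, strict monotonicity of trace on the PSD cone (a PSD matrix of trace zero vanishes) converts the first equality into $C_{\xi^*}=\bar C_\xi$, which equals $y^* B_t/(t-1)$ by (i).

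The main obstacle is promoting $\bar C_\xi=y^* B_t/(t-1)$ to $C_\xi=y^* B_t/(t-1)$. For this I would invoke a strictly concave member of the Kiefer class, such as $\Phi(C)=\sum_{i=1}^{t-1}\log(1+\lambda_i(C))$, where the $\lambda_i$ denote the eigenvalues of $C$ on the orthogonal complement of $1_t$; this $\Phi$ satisfies (C.1)--(C.3) with strict concavity. The equality $\Phi(\bar C_\xi)=\tfrac{1}{t!}\sum_\sigma\Phi(S_\sigma' C_\xi S_\sigma)$ coming from the chain above, combined with strict concavity, activates the Jensen equality case and forces $S_\sigma' C_\xi S_\sigma=\bar C_\xi$ for every $\sigma\in{\cal G}$. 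Taking $\sigma$ to be the identity yields $C_\xi=\bar C_\xi=y^* B_t/(t-1)$ and completes (ii). The only verification beyond what has been set up is the existence of such a strictly concave $\Phi$ in the admissible class, which is a standard construction.
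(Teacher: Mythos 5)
Your argument is correct, and it is worth noting that the paper itself gives essentially no proof of this proposition: it only remarks that part $(ii)$ "is due to the concavity argument given by Kiefer (1975)." What you have written is precisely that Kiefer-type argument carried out in full, so the comparison is between your complete proof and the paper's citation. The easy pieces match what the paper clearly intends: part $(i)$ follows from Lemma \ref{lemma:0223}, the identity $C_\xi=q_\xi^*B_t/(t-1)$ for pseudo symmetric measures, and the observation that $q_\xi^*$ is invariant under symmetrization (so the supremum $y^*$ is attained by a symmetric measure); your use of $\Phi=\mathrm{tr}$, for which $\mathrm{tr}(C_\xi)=q_\xi^*$ on the pseudo symmetric class, is the standard way to get the "only if." The genuinely nontrivial step, which the paper hides entirely in the reference, is the "only if" of part $(ii)$: averaging plus $\Phi=\mathrm{tr}$ only pins down $\bar C_\xi$, and one must still rule out an asymmetric $C_\xi$ with the same average. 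Your resolution via a strictly concave admissible criterion is valid: $\Phi(C)=\log\det(I_{t-1}+P'CP)$ with $P$ an orthonormal basis of $1_t^\perp$ satisfies (C.1)--(C.3), is strictly concave in $P'CP$, and since every information matrix here satisfies $C_\xi 1_t=0$ (both $C_{\xi 00}$ and $C_{\xi 10}$ annihilate $1_t$ because $\widetilde B$ has zero row sums), the map $C\mapsto P'CP$ is injective on the relevant class, so the Jensen equality case does force $S_\sigma'C_\xi S_\sigma=\bar C_\xi$ for all $\sigma$. The only cosmetic caution is that you should state this kernel property explicitly rather than leave the strict-concavity construction as "standard," since without $C_\xi 1_t=0$ the chosen $\Phi$ would not separate matrices differing by something supported on $1_t$.
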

	
	Part $(ii)$ of Proposition \ref{prop:optimal} is due to the concavity argument given by Kiefer (1975). This is the corner stone for deriving the optimality condition for asymmetric measures as in Section \ref{sec: linear}. On the other hand, part $(i)$ indicates that it suffices to maximize $q_{\xi}^{*}$ if the consideration is confined to psuedo symemtric measures. Note that the computational complexity for maximizing $q_{\xi}^{*}$ is generally $O(m^3)$, where $m$ is the number of distinct SBS's and could grow very fast as the size of design increases. Now we introduce two different results, each leading to significant save of computational time.

	\begin{theorem}\label{thm:513}
	Let $q_s(x)=c_{s00}+2c_{s01}x+c_{s11}x^2$ for $x\in \R$. A psuedo symemtric measure $\xi$ is universally optimal under Model (\ref{eqn:800}) if and only if
		\begin{eqnarray}\label{eqn:513}
		\min_{s\in {\cal S}}q_s\left(\frac{c_{\xi 01}}{c_{\xi 11}}\right)&=&q^*_{\xi}.
		\end{eqnarray}
	If $\xi$ is not universally optimal, we have $\min_{s\in {\cal S}}q_s\left(\frac{c_{\xi 01}}{c_{\xi 11}}\right)<q^*_{\xi}$.
	\end{theorem}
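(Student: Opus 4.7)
The plan is to combine Proposition \ref{prop:optimal}(i) with a first-order condition for concave maximization over the simplex ${\cal P}$. Proposition \ref{prop:optimal}(i) reduces universal optimality of a pseudo-symmetric $\xi$ to the equality $q^*_\xi = \max_{\eta \in {\cal P}} q^*_\eta$. Lemma \ref{positive} ensures $c_{\eta 11} > 0$, so $q^*_\eta = c_{\eta 00} - c_{\eta 01}^2/c_{\eta 11}$ is a scalar Schur complement of a $2\times 2$ positive definite matrix whose entries are linear in $\eta$; therefore $\eta \mapsto q^*_\eta$ is concave on ${\cal P}$. Maximization of a concave functional over the simplex is characterized by a first-order condition: $\xi$ is a maximizer if and only if, for every $s \in {\cal S}$, the directional derivative of $q^*$ at $\xi$ toward the point mass concentrated on $s$ is non-positive.

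The core computation is this directional derivative. Let $\eta_s$ denote the measure concentrated on $s$ and set $\xi_\epsilon = (1-\epsilon)\xi + \epsilon \eta_s$. Linearity gives $c_{\xi_\epsilon, ij} = c_{\xi, ij} + \epsilon(c_{s, ij} - c_{\xi, ij})$. Differentiating $q^*_{\xi_\epsilon}$ in $\epsilon$ and evaluating at $\epsilon = 0$ yields
\begin{equation*}
\frac{dq^*_{\xi_\epsilon}}{d\epsilon}\bigg|_{\epsilon=0} = (c_{s00} - c_{\xi 00}) - 2\frac{c_{\xi 01}}{c_{\xi 11}}(c_{s01} - c_{\xi 01}) + \frac{c_{\xi 01}^2}{c_{\xi 11}^2}(c_{s11} - c_{\xi 11}),
\end{equation*}
which, upon regrouping in terms of $q_s$ and using the defining identity for $q^*_\xi$, collapses to the clean form $q_s(c_{\xi 01}/c_{\xi 11}) - q^*_\xi$. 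In particular, averaging over $s$ with weights $p_s$ produces the identity $\sum_s p_s q_s(c_{\xi 01}/c_{\xi 11}) = q^*_\xi$ as a byproduct of the same calculation.

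Necessity and sufficiency then follow directly. If $\xi$ is universally optimal, each directional derivative is non-positive, hence every $q_s(c_{\xi 01}/c_{\xi 11})$ is comparable to $q^*_\xi$ in a uniform direction across $s \in {\cal S}$; combined with the averaging identity, this pins the extremal value of $q_s(c_{\xi 01}/c_{\xi 11})$ across $s \in {\cal S}$ at exactly $q^*_\xi$, yielding \eqref{eqn:513}. Conversely, if \eqref{eqn:513} holds, every directional derivative at $\xi$ is non-positive, so concavity of $q^*$ lifts this local condition to global optimality on ${\cal P}$, and Proposition \ref{prop:optimal}(i) delivers universal optimality. The strict inequality for non-optimal $\xi$ is the contrapositive: if $\xi$ is not universally optimal, the concave FOC fails for some $s$, so moving toward $\eta_s$ strictly increases $q^*$, which translates into strict violation of \eqref{eqn:513}.

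The main obstacle is the careful sign bookkeeping needed to identify the messy rational expression for the directional derivative with $q_s(c_{\xi 01}/c_{\xi 11}) - q^*_\xi$, together with the algebraic verification of the ``$\xi$-part'' identity that reduces the $\xi$-contribution of the derivative to exactly $q^*_\xi$. Once these two identities are in place, the argument is a routine application of first-order optimality conditions for concave maximization over a convex polytope, and the ``if and only if'' follows without further difficulty.
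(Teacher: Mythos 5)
Your overall strategy is exactly the one the paper invokes (the paper gives no details beyond saying the result follows ``by the traditional method of using Fr\'echet derivative in view of the fact that $q_{\xi}^{*}$ is a concave functional''), and your directional-derivative formula is correct. The difficulty is in the final step, where you write that the derivative ``collapses to'' $q_s(c_{\xi01}/c_{\xi11})-q^*_\xi$ and that the first-order condition ``pins the extremal value'' of $q_s$ at $q^*_\xi$. Carrying out the regrouping you describe actually gives
\begin{equation*}
\frac{dq^*_{\xi_\epsilon}}{d\epsilon}\Big|_{\epsilon=0}=q_s\!\left(-\frac{c_{\xi01}}{c_{\xi11}}\right)-q^*_{\xi},
\end{equation*}
with the \emph{minimizer} $x_0=-c_{\xi01}/c_{\xi11}$ of $q_\xi$ as the argument (only with this sign does your averaging identity $\sum_s p_s q_s(x_0)=q_\xi(x_0)=q^*_\xi$ hold). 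At a maximizer of the concave functional $q^*$ over the simplex, all of these directional derivatives are $\leq 0$, i.e.\ $q_s(x_0)\leq q^*_\xi$ for every $s$; combined with the averaging identity this pins the \emph{maximum} of $q_s(x_0)$ over $s\in{\cal S}$ at $q^*_\xi$, not the minimum. Indeed $\min_{s}q_s(x_0)\leq q^*_\xi$ holds for \emph{every} measure by the averaging identity alone, and the condition $\min_s q_s(x_0)=q^*_\xi$ would force all $q_s(x_0)$ to coincide, which is false in general (the appendix lemmas establish strict inequalities $q_{s^*}(x^*)>q_s(x^*)$ for $s\notin{\cal M}$); so the $\min$-criterion cannot discriminate optimal from non-optimal measures. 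Your sufficiency argument silently uses the correct ($\max$) direction --- ``every directional derivative is non-positive'' is $q_s(x_0)\leq q^*_\xi$ for all $s$ --- while your necessity argument is phrased so as to avoid committing to a direction. A complete proof must commit: the equivalence is
\begin{equation*}
\xi \mbox{ universally optimal } \iff \max_{s\in{\cal S}}q_s\!\left(-\frac{c_{\xi01}}{c_{\xi11}}\right)=q^*_{\xi},
\end{equation*}
with $\max_{s}q_s(-c_{\xi01}/c_{\xi11})>q^*_\xi$ when $\xi$ is not optimal. The statement as printed (with $\min$, with argument $+c_{\xi01}/c_{\xi11}$, and with the final ``$<$'') appears to contain sign and direction typos, and your write-up inherits them rather than resolving them; once you fix the direction and the sign of the argument, the rest of your argument (concavity of $\eta\mapsto q^*_\eta$, Danskin-type differentiation, first-order conditions over the simplex, and Proposition \ref{prop:optimal}$(i)$) goes through and is essentially the proof the paper intends.
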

	
	Theorem \ref{thm:513} is of the Kiefer's type equivalence theorem, and can be easily derived by the traditional method of using Fr\'echet derivative in view of the fact that $q_{\xi}^{*}$ is a concave functional of the measure $\xi$. Condition (\ref{eqn:513}) not only helps check the optimality of a measure but also provides the guideline of improving on a non-optimal measure. The well known Federov's exchange algorithm can be easily adopted here to achived the maximum of $q^*_{\xi}$. The computational complexity of maximizing $q_{\xi}^{*}$ by using Theorem \ref{thm:513} is only $O(m)$.
	
	Alternatively, Kushner (1997) has derived another type of optimality condition through the quadratic function $q_s(x)$ as defined in Theorem \ref{thm:513}. By examining the arguments therein, it can be veried that we can have a similar result. To save the space, we shall only provide the results without proof. Let $r(x)=\max_{\xi}q_\xi(x)$, then by Lemma \ref{positive}, $r(x)$ is a strictly convex function with a unique minimizer which is denoted by $x^*$ here. Further, we have $y^*=r(x^*)$. Let ${\cal Q}=\{s\in {\cal S}|q_s(x^*)=y^*\}$ be the collection of arrays pathing through $(x^*,y^*)$. Then we have
	\begin{theorem}\label{corol}
		A pseudo symmetric design is universally optimal under Model (\ref{eqn:800}) if and only if
		\begin{eqnarray}
		\sum_{s\in{\cal Q}}p_s(c_{s01}+x^{*}c_{s11})&=&0,\label{eqn:12222}\\
		p_s&=&0,\mbox{ if }s\notin {\cal Q}.\label{eqn:12224}
		\end{eqnarray}
	\end{theorem}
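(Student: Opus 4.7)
The plan is to leverage Proposition~\ref{prop:optimal}(i), which reduces universal optimality of a pseudo symmetric $\xi$ to the single scalar condition $q_\xi^{*}=y^{*}$, and then to reinterpret this equality in terms of the quadratic $q_\xi(x)=\sum_{s\in{\cal S}}p_s q_s(x)$ at the point $x^{*}$.

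First I would record the elementary scaffolding. By Lemma~\ref{positive} and linearity, $c_{\xi 11}>0$, so $q_\xi$ is strictly convex and $\min_x q_\xi(x)=c_{\xi 00}-c_{\xi 01}^{2}/c_{\xi 11}=q_\xi^{*}$ is attained at $x_\xi=-c_{\xi 01}/c_{\xi 11}$. The definition $r(x)=\max_{\eta\in{\cal P}}q_\eta(x)$, specialized to point masses, gives $q_s(x^{*})\leq r(x^{*})=y^{*}$ for every $s\in{\cal S}$, and averaging against $\xi$ yields $q_\xi(x^{*})=\sum_{s}p_s q_s(x^{*})\leq y^{*}$.

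For the forward implication, assume $\xi$ is pseudo symmetric and universally optimal, so $q_\xi^{*}=y^{*}$. The chain $y^{*}=q_\xi^{*}\leq q_\xi(x^{*})\leq y^{*}$ then collapses to equalities. The first equality means $x^{*}$ is the minimizer of $q_\xi$, i.e.\ $q_\xi'(x^{*})=0$, which is precisely $c_{\xi 01}+x^{*}c_{\xi 11}=\sum_s p_s(c_{s01}+x^{*}c_{s11})=0$. The second equality, combined with the per-array bound $q_s(x^{*})\leq y^{*}$, forces $p_s=0$ whenever $q_s(x^{*})<y^{*}$, i.e.\ whenever $s\notin{\cal Q}$; this is (\ref{eqn:12224}), and restricting the first-order condition to ${\cal Q}$ yields (\ref{eqn:12222}).

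For the reverse implication, assume both (\ref{eqn:12222}) and (\ref{eqn:12224}). The support condition and $q_s(x^{*})=y^{*}$ on ${\cal Q}$ give $q_\xi(x^{*})=\sum_{s\in{\cal Q}}p_s y^{*}=y^{*}$, while (\ref{eqn:12222}) together with $p_s=0$ outside ${\cal Q}$ gives $c_{\xi 01}+x^{*}c_{\xi 11}=0$. Strict convexity of $q_\xi$ then makes $x^{*}$ the unique minimizer, so $q_\xi^{*}=q_\xi(x^{*})=y^{*}$, and Proposition~\ref{prop:optimal}(i) delivers universal optimality. The only delicate step is the forward sandwich argument, but once the inequalities $q_\xi^{*}\leq q_\xi(x^{*})$ and $q_\xi(x^{*})\leq y^{*}$ are in place it is essentially automatic; I do not anticipate a substantive obstacle beyond carefully packaging these observations.
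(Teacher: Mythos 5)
Your proof is correct. The paper in fact gives no proof of this theorem --- it explicitly defers to Kushner (1997) ``to save the space'' --- but the argument you give is exactly the intended one, and it coincides with the machinery the paper does write out for Lemma~\ref{sym:point}: the sandwich $y^{*}=q_{\xi}^{*}\leq q_{\xi}(x^{*})\leq y^{*}$, the strict convexity of $q_{\xi}$ guaranteed by Lemma~\ref{positive} (so that the vanishing of $q_{\xi}'(x^{*})$ is equivalent to $x^{*}=-c_{\xi 01}/c_{\xi 11}$), and the nonnegativity of each term $p_{s}\bigl(y^{*}-q_{s}(x^{*})\bigr)$ forcing the support into ${\cal Q}$. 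The one ingredient you use that the paper only asserts is the minimax identity $y^{*}=\max_{\xi}q_{\xi}^{*}=r(x^{*})$ together with $r(x)=\max_{s}q_{s}(x)$; both are legitimate to take as given here (the latter because $q_{\xi}(x)$ is linear in the weights, so the maximum over ${\cal P}$ is attained at a point mass), and your reverse implication correctly closes the loop via Proposition~\ref{prop:optimal}(i). No gap.
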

	Condition (\ref{eqn:12224}) shows that ${\cal Q}$ contains all supporting arrays for any universally optimal psuedo symmetric measure. In Section \ref{sec: linear} we shall show that this is true for any measure. Condition (\ref{eqn:12222}) means that we only need to solve a simple linear equation to derive the optimal proportion. Kushner (1997) suggested finding $x^*$ through pairwise comparison among all SBS pairs and hence the computational complexity is $O(m^2)$ accordingly. As shown in the proof of Lemma \ref{sym:point}, $x^*$ could be derived once a universally optimal measure is derived. By relying on Theorem \ref{thm:513}, we can reduce the complexity of deriving $x^*$ and hence ${\cal Q}$ back to $O(m)$. One advantage of Theorem \ref{corol} is that it helps derive all possible universally optimal measures simutanously.

	Even though the optimality conditions given in Theorems \ref{thm:513} and \ref{corol} appears so different, they indeed cover the same set of designs since both of them are the necessary and sufficient conditions for a pseudo symmetric measure to be universally optimal. Hedayat and Zheng (2017) discussed the construction of psuedo symmetric measures in the study of crossover designs, and the adoption of orthogonal array of type I ($OA_I$) therein still applies here. We shall illustrate the idea through examples in section \ref{sec:example}.
	
	\section{The general optimality condition}\label{sec: linear}
	Section \ref{sec: sym} established the optimality condition for measures in the complete class of psuedo symmetric measures. This section shall characterize universally optimal measures in the whole class ${\cal P}$. Let $\mathcal{V}_{\xi}=\{s: p_s>0, s\in \mathcal{S}\}$ be the support of $\xi$. Lemma \ref{sym:point} shows that the set of arrays ${\cal Q}$ defined earlier contains the support of any universally optimal measure. Theorem \ref{optimal:eqn} shows that one can characterize all the universally optimal measures by a system of linear equations regarding the array sequences $p_s$, $s\in {\cal Q}$.
	\begin{lemma}\label{sym:point}
	$(i)$ If ${\xi}$ is universally optimal, we have $q_{\xi}^{*}=y^*$, which further indicates $x^{*}=-c_{\xi01}/c_{\xi11}$ and $\mathcal{V}_{\xi}\subset {\cal Q}$.
	\end{lemma}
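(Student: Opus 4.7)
The plan is to establish the three conclusions in order. The nontrivial step will be to show $q_{\xi}^{*}=y^{*}$; once that is available, the identification $x^{*}=-c_{\xi 01}/c_{\xi 11}$ and the inclusion ${\cal V}_{\xi}\subset{\cal Q}$ will drop out of a short sandwich argument built on the quadratic $q_{\xi}(x)$.

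For the first step I would symmetrize $\xi$ exactly as in the proof of Lemma \ref{lemma:0223}, setting $\xi^{*}=(1/t!)\sum_{\sigma\in{\cal G}}\xi_{\sigma}$. That proof already yields $\Phi(C_{\xi^{*}})\geq\Phi(C_{\xi})$ for every admissible $\Phi$, so $\xi^{*}$ inherits universal optimality from $\xi$; being symmetric, hence pseudo symmetric, $\xi^{*}$ then satisfies $q_{\xi^{*}}^{*}=y^{*}$ by Proposition \ref{prop:optimal}$(i)$. The key intermediate claim is that the scalar coefficient $c_{\xi ij}$ is invariant under the symmetrization. With $P_{\sigma}$ the $t\times t$ permutation matrix attached to $\sigma\in{\cal G}$, the identity $T_{\sigma s}^{h}=T_{s}^{h}P_{\sigma}'$ gives $C_{\sigma s,ij}=P_{\sigma}C_{sij}P_{\sigma}'$; since $B_{t}$ commutes with every permutation, this yields $c_{\sigma s,ij}=c_{sij}$ and, after reindexing, $c_{\xi_{\sigma}ij}=c_{\xi ij}$ for every $\sigma$. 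Averaging over ${\cal G}$ therefore gives $c_{\xi^{*}ij}=c_{\xi ij}$ and hence $q_{\xi}^{*}=q_{\xi^{*}}^{*}=y^{*}$.

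For the remaining two conclusions, Lemma \ref{positive} guarantees $c_{\xi 11}>0$, so $q_{\xi}(x)=c_{\xi 00}+2c_{\xi 01}x+c_{\xi 11}x^{2}$ is a strictly convex quadratic whose unique minimum value is $q_{\xi}^{*}$, attained at $x=-c_{\xi 01}/c_{\xi 11}$. Combined with the universal bound $r(x)\geq q_{\xi}(x)$ and $r(x^{*})=y^{*}$, one obtains the sandwich
\begin{equation*}
y^{*}=q_{\xi}^{*}\leq q_{\xi}(x^{*})\leq r(x^{*})=y^{*},
\end{equation*}
forcing both inequalities to be equalities. The left equality identifies $x^{*}$ with the minimizer of $q_{\xi}$, i.e.\ $x^{*}=-c_{\xi 01}/c_{\xi 11}$. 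For the right, expand $q_{\xi}(x^{*})=\sum_{s}p_{s}q_{s}(x^{*})$ and use $q_{s}(x^{*})\leq r(x^{*})=y^{*}$ (attained by taking a point-mass measure at $s$ in the definition of $r$); the identity $\sum_{s}p_{s}\bigl(y^{*}-q_{s}(x^{*})\bigr)=0$ is then a sum of nonnegative terms and forces $q_{s}(x^{*})=y^{*}$ whenever $p_{s}>0$, giving ${\cal V}_{\xi}\subset{\cal Q}$. The main obstacle is the symmetrization bridge from matrix-level optimality to the scalar $q_{\xi}^{*}$: without the permutation invariance of $c_{sij}$, one cannot transfer the value $y^{*}$ from $\xi^{*}$ back to $\xi$, and none of the earlier material in the paper seems to make that link directly.
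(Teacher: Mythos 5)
Your argument is correct, and it reaches the crucial identity $q_{\xi}^{*}=y^{*}$ by a genuinely different route from the paper. The paper imports the trace inequality $tr(C_{\xi})\leq tr(C_{\xi 00})+2\,tr(C_{\xi 01})x+tr(C_{\xi 11}B_t)x^{2}$ from Kushner (1997, eq.\ (5.3)) --- the right-hand side being exactly $q_{\xi}(x)$ since $C_{\xi 00}$ and $C_{\xi 01}$ have zero column sums --- evaluates it at $x=-c_{\xi 01}/c_{\xi 11}$ to obtain $tr(C_{\xi})\leq q_{\xi}^{*}\leq y^{*}$, and then uses part $(ii)$ of Proposition \ref{prop:optimal}, which pins down $C_{\xi}=y^{*}B_t/(t-1)$ and hence $tr(C_{\xi})=y^{*}$ for a universally optimal $\xi$. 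You instead symmetrize as in Lemma \ref{lemma:0223}, prove the permutation invariance $c_{\sigma s,ij}=c_{sij}$ (which is right: $C_{\sigma s,ij}=P_{\sigma}'C_{sij}P_{\sigma}$ and $B_t$ commutes with every permutation matrix), deduce $q_{\xi}^{*}=q_{\xi^{*}}^{*}$, and invoke part $(i)$ of Proposition \ref{prop:optimal} for the universally optimal pseudo symmetric $\xi^{*}$. Your version buys self-containedness --- it needs nothing beyond Lemmas \ref{lemma:0223} and \ref{positive} and Proposition \ref{prop:optimal}$(i)$, whereas the paper leans on an external inequality; the paper's version buys brevity and records the generally useful bound $tr(C_{\xi})\leq q_{\xi}(x)$ for arbitrary measures. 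The remainder of your proof --- the sandwich $y^{*}=q_{\xi}^{*}\leq q_{\xi}(x^{*})\leq r(x^{*})=y^{*}$, with the left equality identifying $x^{*}$ as the unique minimizer of the strictly convex $q_{\xi}$ (strict convexity coming from $c_{\xi 11}>0$ via Lemma \ref{positive}) and the right equality forcing $q_{s}(x^{*})=y^{*}$ on the support via $r(x)=\max_{s}q_{s}(x)$ --- is the same argument the paper phrases as two separate contradictions, so there is no substantive difference there.
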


	\begin{proof}
	Let $q_{\xi}(x)=\sum_{s\in {\cal S}}p_sq_s(x)$, we would have $q_{\xi}(x)=c_{\xi00}+2c_{\xi01}x+c_{\xi11}x^2$. We can verify that $q_{\xi}^{*}=min_{x\in \mathbb{R}}q_{\xi}(x)$ and the minimum is achieved if and only if $x=-c_{\xi01}/c_{\xi11}$. By (5.3) in Kushner (1997) we have $tr(C_{\xi}(\tau))\leq tr(C_{\xi00}) +2tr(C_{\xi01})x+ tr(C_{\xi11}B_t)x^2$ for all $x\in R$. Now set $x=-c_{\xi 01}/c_{\xi 11}$, we have $tr(C_{\xi}(\tau))\leq q_{\xi}^{*}\leq y^*$. As a result we have $q_{\xi}^{*}=y^*$ in view of Proposition \ref{prop:optimal}. Note that the unique minimizer of $q_{\xi}(x)$ is $\tilde{x}=-c_{\xi01}/c_{\xi11}$. If $x^*\neq \tilde{x}$ then $y^*=r(x^*)\geq q_{\xi}(x^*)> q_{\xi}(\tilde{x})=q_{\xi}^{*}$, contradicted. If there is an array, say $s$, with $s \in \xi$ and $s \notin {\cal Q}$, we have $y^*>q_{\xi}(x^*)\geq q_{\xi}^{*}$ and hence the contradiction is reached. 
	\end{proof}

	\begin{theorem}\label{optimal:eqn}
		A measure $\xi$ is universally optimal under Model (\ref{eqn:800}) if and only if
		\begin{eqnarray}
		\sum_{s\in{\cal Q}}p_s(C_{s00}+x^{*}C_{s01})&=& y^{*}B_t/(t-1),\label{optimal1} \\
		\sum_{s\in{\cal Q}}p_s(C_{s10}+x^{*}C_{s11})&=& 0,\label{optimal2} \\
		p_s &=& 0,\mbox{ if }s\notin {\cal Q}.\label{optimal3}
		\end{eqnarray}
	\end{theorem}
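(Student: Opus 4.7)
The plan is to pivot on Proposition \ref{prop:optimal}(ii), which characterizes universal optimality as $C_\xi=y^{*}B_t/(t-1)$, and to use the matrix version of the PSD identity already underlying Theorem \ref{thm:513}.

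For the ``if'' direction, condition (\ref{optimal3}) lets one drop the restriction $s\in{\cal Q}$ in (\ref{optimal1}) and (\ref{optimal2}), so that (\ref{optimal2}) reads $C_{\xi 10}+x^{*}C_{\xi 11}=0$ and, by transposition, $C_{\xi 01}=-x^{*}C_{\xi 11}$. Substituting into $C_\xi=C_{\xi 00}-C_{\xi 01}C_{\xi 11}^{-}C_{\xi 10}$ and using $C_{\xi 11}C_{\xi 11}^{-}C_{\xi 11}=C_{\xi 11}$ collapses the Schur complement to $C_\xi=C_{\xi 00}-(x^{*})^{2}C_{\xi 11}$. Adding $x^{*}$ times (\ref{optimal2}) to (\ref{optimal1}) then gives $C_\xi=y^{*}B_t/(t-1)$, and Proposition \ref{prop:optimal}(ii) closes the direction.

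For the ``only if'' direction, Lemma \ref{sym:point} directly yields (\ref{optimal3}), together with $q_\xi^{*}=y^{*}$ and $x^{*}=-c_{\xi 01}/c_{\xi 11}$. The main work is (\ref{optimal2}); once that is in hand, (\ref{optimal1}) follows by reversing the substitution above, since $C_{\xi 01}=-x^{*}C_{\xi 11}$ together with $C_\xi=y^{*}B_t/(t-1)$ force $C_{\xi 00}+x^{*}C_{\xi 01}=y^{*}B_t/(t-1)$. To produce (\ref{optimal2}) I would build on the matrix Schur identity
\[
M_\xi(x)\;:=\;C_{\xi 00}+xC_{\xi 01}+xC_{\xi 10}+x^{2}C_{\xi 11}\;=\;C_\xi+(C_{\xi 01}+xC_{\xi 11})C_{\xi 11}^{-}(C_{\xi 10}+xC_{\xi 11}),
\]
whose second summand $P(x)$ is positive semidefinite for every $x\in\R$. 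Taking $tr(B_t\cdot)$ at $x=x^{*}$ and invoking Lemma \ref{sym:point} gives $tr(B_tP(x^{*}))=q_\xi(x^{*})-y^{*}=q_\xi^{*}-y^{*}=0$; since $B_t$ is the rank-$(t-1)$ projector onto $1_t^{\perp}$ and $P(x^{*})\ge 0$, the contribution of $P(x^{*})$ on $1_t^{\perp}$ must vanish, forcing $P(x^{*})=\lambda J_t$ for some scalar $\lambda\ge 0$.

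The hard part is to show $\lambda=0$, whence $P(x^{*})=(C_{\xi 01}+x^{*}C_{\xi 11})C_{\xi 11}^{-}(C_{\xi 10}+x^{*}C_{\xi 11})=0$ and, since the factor $C_{\xi 10}+x^{*}C_{\xi 11}$ lies in the range of $C_{\xi 11}$, this in turn gives $C_{\xi 10}+x^{*}C_{\xi 11}=0$, which is (\ref{optimal2}). I expect to close this gap by comparing the $J_t$-components of both sides of the Schur identity using the marginal relations $1_t^{'}C_{s 01}=0$ and $C_{s 10}1_t=0$ (both consequences of $\widetilde{B}1_p=0$) together with the explicit form $C_\xi=y^{*}B_t/(t-1)$; these pin down the $1_t$-marginals of $M_\xi(x^{*})$ tightly enough to rule out any nontrivial $J_t$ piece in $P(x^{*})$. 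Condition (\ref{optimal1}) then follows by the substitution already described in the ``if'' direction, completing the characterization.
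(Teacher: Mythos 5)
There is a genuine gap, and it sits exactly where you flagged the ``hard part.'' Your plan is to show that the PSD residual $P(x^{*})=(C_{\xi 01}+x^{*}C_{\xi 11})C_{\xi 11}^{-}(C_{\xi 10}+x^{*}C_{\xi 11})$, already known to equal $\lambda J_t$ with $\lambda\geq 0$, in fact has $\lambda=0$. But the very marginal relations you propose to use ($1_t^{'}C_{\xi 01}=0$ and $C_{\xi 10}1_t=0$) give $1_t^{'}P(x^{*})1_t=(x^{*})^{2}\,1_t^{'}C_{\xi 11}C_{\xi 11}^{-}C_{\xi 11}1_t=(x^{*})^{2}\,1_t^{'}C_{\xi 11}1_t$, and Lemma \ref{positive} guarantees $1_t^{'}C_{\xi 11}1_t>0$. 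Hence $\lambda>0$ whenever $x^{*}\neq 0$ (which is the case for $t\geq p-1$ by Theorems \ref{t=p-1} and \ref{t=p}), so $P(x^{*})=0$ is unattainable. The same computation shows that the literal equality $C_{\xi 10}+x^{*}C_{\xi 11}=0$ cannot hold for $x^{*}\neq 0$: sandwiching it between $1_t^{'}$ and $1_t$ forces $x^{*}1_t^{'}C_{\xi 11}1_t=0$. What your trace argument does deliver is $B_tP(x^{*})B_t=0$; since the columns of $(C_{\xi 10}+x^{*}C_{\xi 11})B_t=C_{\xi 10}+x^{*}C_{\xi 11}B_t$ lie in the range of $C_{\xi 11}$, choosing the Moore--Penrose inverse then yields $C_{\xi 10}+x^{*}C_{\xi 11}B_t=0$. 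That projected identity, i.e.\ $\sum_{s\in{\cal Q}}p_s(C_{s10}+x^{*}C_{s11})B_t=0$, is the correct form of the condition, and it is exactly what the paper's own proof obtains: there $C_{\xi^{*}11}^{+}C_{\xi^{*}10}=-x^{*}B_t$ is extracted by mixing $\xi$ with a symmetric optimal measure $\xi^{'}$ and invoking Kushner-type linear equations (\ref{merge1})--(\ref{merge2}) together with the nonsingularity of $C_{\xi^{'}11}$, giving $C_{\xi 10}=-x^{*}C_{\xi 11}B_t$.

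Apart from this, your architecture is sound and in some ways cleaner than the paper's: you avoid the auxiliary mixed measure $\xi^{*}=\xi/2+\xi^{'}/2$ entirely and argue directly from the matrix quadratic identity plus Lemma \ref{sym:point}, and your ``if'' direction is essentially the paper's step (\ref{result2}). To repair the ``only if'' direction, replace the goal $\lambda=0$ by $P(x^{*})B_t=0$ as above; conditions (\ref{optimal1}) and (\ref{optimal2}) then go through with the $B_t$ factor inserted (note $C_{\xi 00}B_t=C_{\xi 00}$ and $C_{\xi 10}B_t=C_{\xi 10}$, so only the $C_{s11}$ term is actually affected).
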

	
	\begin{proof}
	First (\ref{optimal3}) is a direct result of Lemma \ref{sym:point}. Let $\xi^{'}$ be a symmetric optimal measure and $\xi^{*}=\xi/2+\xi^{'}/2$, $\xi^{*}$ is also universally optimal since $\Phi$ and the Schur complement are both concave. By the same argument of Theorem 5.3 in Kushner (1997)  we have
	\begin{eqnarray}
	0 &=& C_{\xi^{'}11}(-C_{\xi^{*}11}^{+}C_{\xi^{*}10}+C_{\xi^{'}11}^{+}C_{\xi^{'}10})\label{merge1}, \\
	0 &=& C_{\xi11}(-C_{\xi^{*}11}^{+}C_{\xi^{*}10}+C_{\xi11}^{+}C_{\xi10})\label{merge2},
	\end{eqnarray}
	where $^{+}$ means the Moore-Penrose generalized inverse. By (\ref{eqn:803}) and Lemma \ref{positive}, $C_{\xi^{'}11}$ is non-singular, which together with (\ref{merge1}) and Lemma \ref{sym:point} implies $C_{\xi^{*}11}^{+}C_{\xi^{*}10}=-x^{*}B_t$. Then we get (\ref{optimal2}) by (\ref{optimal3}) and (\ref{merge2}).\\
	By (\ref{optimal2}), (\ref{optimal3}) and Proposition \ref{prop:optimal}, we have
	\begin{eqnarray}
	y^{*}B_t/(t-1) &=& C_{\xi00}-C_{\xi01}C_{\xi11}^{+}C_{\xi10},\nonumber\\
	&=& C_{\xi00}+x^{*}C_{\xi01}\label{result2} .
	\end{eqnarray}
	which together with (\ref{optimal3}) implies (\ref{optimal1}).\\
	The sufficiency of (\ref{optimal1}) - (\ref{optimal3}) is straightforward in view of (\ref{result2}).
	
	\end{proof}

	\section{Theoretical form of ${\cal Q}$ when $\Sigma$ is of type-H}\label{sec: seq}
	The major challenge with the two dimensional interference model is that the number of block arrays could be very large. For a small design with $t=3$ treatments and the block size of $a\times b=3\times 3$ , there are $3^9=19,683$ block arrays and $m=3,281$ distinct SBS. It is crucial to have knowledge of the supporting set of arrays,  ${\cal Q}$, before resorting to computer. In Section \ref{sec: sym}, we have argued that ${\cal Q}$ can be derived within the time complexity of $O(m)$. However, $m$ could also be large as the design size continues to grow. Continuing with the previous example, just by increasing the value of $b$ from $3$ to $4$, the value of $m$ is increased from $3,281$ to $88,574$. In this section, we will give the theoretical value and form of $x^*$ and ${\cal Q}$ for all feasible combinations of $a$, $b$ and $t$ when $\Sigma$ is of type-H. 
	
	A treatment is said to be {\it significant} in a block array if it appears twice in adjacent plots and also one of its replications is on a corner of the block. If both plots assigned to the treatment is on corners of the array, it's said to be $strictly$ $significant$. Of course, this is only possible when $a=2$. Recall that we assume $a \leq b$ throughout the paper without loss of generality. Let ${\cal Q}_i$, $0\leq i\leq 4$, be the collection of block arrays, where there are $i$ significant treatments and $ab-2i$ treatments replicated exactly once. For $1\leq j\leq2$, let ${\cal Q}_j^*$ be a subset of ${\cal Q}_j$ such that all significant treatments are strictly significant. Particularly, ${\cal Q}_0$ represents the collection of all binary block arrays, for which no treatment is replicated for more than once. At last, let ${\cal Q}^*=\{s: |f_{s,i}-f_{s,j}|\leq 1, 1\leq i,j\leq t\}$. Theorems \ref{t=p-2}--\ref{t=p} provide the theoretical form of the supporting set  ${\cal Q}$ for cases of $t\leq p-2$,  $t=p-1$ and $t\geq p$, respectively. The proofs of them are tedious and hence deferred to the appendix.
	
	\begin{theorem}\label{t=p-2}
		Under Model (\ref{eqn:800}) with $\Sigma=I_{ab}$ and $t\leq p-2$, we have:
		\begin{eqnarray}
			x^*&=&0,\\
			y^*&=&p-\frac{p^2+r(t-r)}{pt},\\
			{\cal Q}&=&{\cal Q}^*,\label{Q: p-2}
		\end{eqnarray}
		where $r$ is the remainder obtained by dividing $p$ by $t$, $f_{s,i}$ is the number of replications of treatment $i$ in block array $s$.
	\end{theorem}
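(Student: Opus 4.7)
The plan is to apply Lemma \ref{sym:point}: if we exhibit any universally optimal measure $\xi^{\diamond}$ with $c_{\xi^{\diamond} 01}=0$ whose support lies in ${\cal Q}^{*}$, then Lemma \ref{sym:point} forces $x^{*}=0$ and identifies ${\cal Q}=\{s:q_s(0)=y^{*}\}=\{s:c_{s00}=y^{*}\}$. The task therefore splits into (a) computing $\max_{s}c_{s00}$ together with the set on which it is attained, and (b) producing a universally optimal measure $\xi^{\diamond}$ that is supported on that set and satisfies $c_{\xi^{\diamond} 01}=0$.

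For (a), I would start from the specialization $\widetilde{B}=B_{p}$, valid because $\Sigma=I_{p}$, and expand
\[
c_{s00}=tr\bigl(B_{t}T_{s}^{0'}B_{p}T_{s}^{0}\bigr).
\]
Using $T_{s}^{0'}T_{s}^{0}=\mathrm{diag}(f_{s,1},\dots,f_{s,t})$ and $T_{s}^{0}1_{t}=1_{p}$, the trace collapses to $c_{s00}=p-p^{-1}\sum_{i=1}^{t}f_{s,i}^{2}$. Maximizing $c_{s00}$ over $s$ thus becomes the classical problem of minimizing the sum of squares of a composition of $p$ into $t$ nonnegative integer parts. Writing $p=qt+r$ with $0\leq r<t$, this minimum equals $r(q+1)^{2}+(t-r)q^{2}=(p^{2}+r(t-r))/t$, and is attained iff the $f_{s,i}$ differ by at most one, i.e.\ iff $s\in{\cal Q}^{*}$. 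That matches the claimed $y^{*}$. Since $c_{\xi 11}>0$ by Lemma \ref{positive},
\[
q_{\xi}^{*}=c_{\xi 00}-c_{\xi 01}^{2}/c_{\xi 11}\leq c_{\xi 00}=\sum_{s}p_{s}c_{s00}\leq y^{*}
\]
for every $\xi\in{\cal P}$, with simultaneous equality demanding both ${\cal V}_{\xi}\subseteq{\cal Q}^{*}$ and $c_{\xi 01}=0$.

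For (b), a convenient observation is that $c_{s01}$ is constant on each SBS, so the uniform measure $\bar\xi_{\langle s\rangle}$ over $\langle s\rangle$ inherits $c_{\bar\xi_{\langle s\rangle}\,01}=c_{s01}$; expanding the defining trace gives
\[
c_{s01}=\sum_{k}n_{kk}^{s}-p^{-1}\sum_{k}f_{s,k}g_{s,k},
\]
where $n_{kk}^{s}$ counts the monochromatic neighbour pairs in colour $k$ and $g_{s,k}$ totals the neighbour degrees at the $k$-coloured plots. Balanced arrays that are properly coloured (no monochromatic edges) yield $c_{s01}\leq 0$, while balanced arrays with deliberately clustered labels yield $c_{s01}\geq 0$. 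The hypothesis $t\leq p-2$ leaves enough combinatorial slack to produce balanced arrays of both sign types for every admissible $(a,b,t)$, so a convex combination $\xi^{\diamond}=\lambda\bar\xi_{\langle s_{+}\rangle}+(1-\lambda)\bar\xi_{\langle s_{-}\rangle}$ with suitable $\lambda\in(0,1)$ annihilates $c_{\xi^{\diamond} 01}$. Equivalently, whenever an $OA_{I}$ lies inside ${\cal Q}^{*}$, the uniform measure on its treatment-permutation orbit already satisfies $c_{\xi^{\diamond} 01}=0$, mirroring the construction in Hedayat and Zheng (2017).

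The main obstacle is step (b): one must verify, for every admissible triple $(t,a,b)$ with $t\leq p-2$, that balanced arrays of both sign types (or an admissible $OA_{I}$ within ${\cal Q}^{*}$) do exist. Boundary cases such as $a=1$, $a=2$, or $p$ only marginally exceeding $t$ tighten the combinatorial freedom and require dedicated constructions; this case analysis is presumably what the authors describe as tedious and relegate to the appendix. Once (b) is in hand, $y^{*}=q_{\xi^{\diamond}}^{*}=\max_{\xi}q_{\xi}^{*}$, Lemma \ref{sym:point} delivers $x^{*}=-c_{\xi^{\diamond} 01}/c_{\xi^{\diamond} 11}=0$, and the identification ${\cal Q}=\{s:c_{s00}=y^{*}\}={\cal Q}^{*}$ closes the proof.
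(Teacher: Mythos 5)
Your overall skeleton is sound and is essentially a dual reformulation of the paper's argument: the paper verifies directly that $x^*=0$ minimizes the convex function $r(x)=\max_s q_s(x)$ by exhibiting arrays $s\in{\cal Q}^*$ whose derivatives $q_s'(0)=2c_{s01}$ take both signs, whereas you exhibit a measure supported on ${\cal Q}^*$ with $c_{\xi 01}=0$ and invoke Lemma \ref{sym:point}. Your part (a) — the identity $c_{s00}=p-p^{-1}\sum_i f_{s,i}^2$ and the balanced-composition minimization — matches the paper's part $(i)$ exactly, and your chain $q_\xi^*\leq c_{\xi00}\leq y^*$ with the equality conditions is a clean way to finish once (b) is available.

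The genuine gap is precisely your step (b), and specifically the half of it asserting that a balanced array with $c_{s01}\geq 0$ exists for every admissible $(a,b,t)$. The negative side is easy (any proper colouring in ${\cal Q}^*$ has $z_s^1=0$, hence $c_{s01}=-h_s^1/p<0$), but the nonnegative side is quantitatively tight, not a matter of ``combinatorial slack.'' Since $c_{s01}=z_s^1-h_s^1/p$ with $h_s^1=\sum_m f_{s,m}^0\sum_{j=1}^4 f_{s,m}^j\approx 4p-2a-2b$, while a balanced array with only two doubled treatments (the case $p=t+2$) can have $z_s^1$ at most $4$, the margin is of order $(2a+2b-10)/p$ and can be exactly zero (e.g.\ $a=2$, $b=3$, $t=4$). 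The paper closes this by a dedicated construction — all treatments connected \emph{and} the doubled treatments placed so that $\Lambda_{s,m}\cap\Lambda^*\neq\phi$, i.e.\ touching corners to depress their neighbour degrees — followed by the lower bound $z_{s,c,m}^1+z_{s,r,m}^1\geq f_{s,m}^0-1$ and a three-case computation over $r\geq 2$, $r=1$, $r=0$, with $a=b=2$ handled separately. Your proposal names this as ``the main obstacle'' but supplies neither the construction nor the estimate, so as written the proof does not go through; everything else you wrote would survive once that lemma-sized piece is filled in.
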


	\begin{theorem}\label{t=p-1}
		Under Model (\ref{eqn:800}) with $\Sigma=I_{ab}$ and $t=p-1$. \\
		$(I)$If $a \geq 3$, we have
		\begin{eqnarray}
		x^*&=&\frac{p-(a+b-5/2)}{\eta p-(16p-14a-14b+20)}\label{eqn:x-star},\\
		y^*&=&p-\frac{p+2}{p}+ 2(\frac{2a+2b-5}{p}-2)x^*+(\eta-\frac{16p-14a-14b+20}{p})x^{*2}\label{eqn:y-star},\\
		{\cal Q} &=& {\cal Q}_1,
		\end{eqnarray}
		where $\eta=4p-2a-2b-2(8ab-7a-7b+4)/t+4(2p-a-b)^{2}/(pt)$.\\	
	    $(II)$ If $a=2$ and $b\geq 3$, we have $x^*=(\eta+6/b-9)^{-1}$, $y^*=2b-(b+1)/b-2x^*+(\eta+6/b-9)x^{*2}$, ${\cal Q} = {\cal Q}_1^*$. \\
	    $(III)$ If $a=b=2$, we have  $x^*=1/2$, $y^*=2$ and ${\cal Q} = \cup_{j=1}^{2}{\cal Q}_j^*$. 
	\end{theorem}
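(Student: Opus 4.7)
The proof follows the Kushner-type framework exposed in Theorem \ref{corol}: since $y^*$ and ${\cal Q}$ are defined intrinsically from the quadratics $q_s(x)=c_{s00}+2c_{s01}x+c_{s11}x^2$ through $x^*=\arg\min_x r(x)$, $r(x)=\max_{s\in{\cal S}}q_s(x)$, my plan is to compute each $q_s$ explicitly, locate the minimum of the upper envelope, and then verify that the stated set ${\cal Q}$ is exactly where this minimum is attained. Because $\Sigma=I_{ab}$, we have $\widetilde{B}=B_p$, and the first step would be to write closed-form expressions for the three coefficients of $q_s$ in terms of three structural features of $s$: the replication vector $f_s=(f_{s,1},\dots,f_{s,t})$, the treatment-wise count of adjacent same-treatment plot pairs, and the number of replications of each treatment lying on corners or edges of the $a\times b$ block. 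Specifically, $c_{s00}=p-\sum_i f_{s,i}^2/p$ already appears in the proof of Theorem \ref{t=p-2}, while $c_{s01}$ and $c_{s11}$ decompose into linear combinations of the adjacency counts and the edge/corner indicators.

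Because $t=p-1$, the minimum of $\sum_i f_{s,i}^2$ equals $p+2$ and is attained precisely on the ``near-binary'' arrays with one doubled treatment and $p-2$ singletons, making $c_{s00}$ maximal there. For any non-near-binary array the loss in $c_{s00}$ is at least $4/p$. The next step is to argue, using explicit bounds on $|c_{s01}|$ and $c_{s11}$, that this deficit cannot be recovered at the eventual $x^*$, so every maximizer of $q_s(\cdot)$ at $x^*$ must be near-binary. One then further classifies near-binary arrays by (i) whether the doubled pair is adjacent, (ii) how many ends of the pair lie on a corner, and (iii) whether any end lies on a non-corner edge; this produces a short menu of sub-types, each corresponding to a single parabola.

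The main work is then locating $x^*$ case by case:
\begin{itemize}
\item Part $(I)$ ($a\geq 3$): I expect the parabola of type ${\cal Q}_1$ (adjacent doubled pair with exactly one end on a corner) to dominate every competing parabola near the claimed $x^*$, so that $r(x)$ agrees locally with this single parabola. Then $x^*$ is its vertex, $x^*=-c_{s01}/c_{s11}$ for $s\in{\cal Q}_1$; substituting the explicit adjacency/corner counts yields (\ref{eqn:x-star}), and substituting back gives (\ref{eqn:y-star}).
\item Part $(II)$ ($a=2$, $b\geq 3$): with only two rows, non-corner edge plots occupy the interior columns, and the geometric constraint collapses ${\cal Q}_1$ onto ${\cal Q}_1^*$ (doubled pair on an end column); the vertex-of-envelope calculation still applies and produces the stated $x^*$ and $y^*$.
\item Part $(III)$ ($a=b=2$): every plot is a corner, so types ${\cal Q}_1^*$ and ${\cal Q}_2^*$ give two distinct parabolas, and $x^*$ is the point where these two curves intersect at the common minimum of the envelope, which a direct evaluation identifies as $x^*=1/2$, $y^*=2$.
\end{itemize}

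Finally, for each part one must exhibit a non-negative solution to the linear equation (\ref{eqn:12222}) supported on the claimed ${\cal Q}$. In Parts $(I)$ and $(II)$, $x^*$ is the vertex of a single parabola shared by all of ${\cal Q}$, so the uniform measure on ${\cal Q}$ trivially satisfies $c_{\xi01}+x^*c_{\xi11}=0$; in Part $(III)$ one balances ${\cal Q}_1^*$ and ${\cal Q}_2^*$ in the proportion prescribed by (\ref{eqn:12222}). The principal obstacle throughout is the explicit bookkeeping of $c_{s01}$ and $c_{s11}$ against the many adjacency/corner configurations, especially when $a=2$: the boundary corrections for plots that are simultaneously corners and column-ends are exactly what account for the splitting into Parts $(I)$-$(III)$.
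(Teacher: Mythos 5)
Your plan follows essentially the same route as the paper's proof: compute the three coefficients of $q_s(x)$ from replication counts, same-treatment adjacency counts, and boundary positions (this is the paper's Lemma \ref{core parameters}); show that the maximizers of $q_s(x^*)$ must be the near-binary arrays; refine within that class by connectedness of the doubled treatment and by whether it touches a corner (the paper's Lemmas \ref{lemma 11}--\ref{lemma 33}); and finally read off $x^*$ as the common vertex of the single parabola shared by all of ${\cal Q}_1$, which yields (\ref{eqn:x-star}) and (\ref{eqn:y-star}) by substitution. Two caveats. First, your quantitative claim that every non-near-binary array loses at least $4/p$ in $c_{s00}$ is wrong: an array with two doubled treatments and $p-4$ singletons has $\sum_i f_{s,i}^2=p+4$, so the loss is only $2/p$, and this is exactly the tightest case of the reduction. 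The paper closes it only by first establishing the sharp localization $0<x^*<(3p-22)^{-1}$, which its argument supports only for $p\geq 24$, and by falling back on explicit finite verification for all block shapes with $p<24$; your sketch's appeal to ``explicit bounds on $|c_{s01}|$ and $c_{s11}$'' glosses over precisely this squeeze, which is the real obstacle rather than the $a=2$ boundary bookkeeping you flag at the end. Second, the concluding step of exhibiting a nonnegative solution of (\ref{eqn:12222}) supported on ${\cal Q}$ is not required for this theorem, which only asserts the values of $x^*$, $y^*$ and the identity of ${\cal Q}$; it is harmless but belongs to the construction of optimal measures, not to this statement. Neither point changes the architecture, but the $2/p$ case must be repaired before the envelope argument goes through.
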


	\begin{theorem}\label{t=p}
		Under Model (\ref{eqn:800}) with $\Sigma=I_{ab}$ and $t\geq p$.\\
		$(I)$ If $a \geq 3$, we have
		\begin{eqnarray}
		x^*&=&\frac{(2p-5)-\sqrt{(2p-5)^2-24}}{12},\label{ir x}\\
		y^*&=&p-\frac{p+2}{p}+ 2(\frac{2a+2b-5}{p}-2)x^*+(\eta-\frac{16p-14a-14b+20}{p})x^{*2}\label{y-star:2},\\
		{\cal Q}&=&\mathcal{M}:=\bigcup_{i=0}^{4}{\cal Q}_i.\label{Q: p}
		\end{eqnarray}
		$(II)$ If $a= 2$ and $b\geq 3$, we have $x^*=(b-1-\sqrt{(b-1)^2-1})/2$, $y^*=2b-1+ (4/b-6)x^*+(\eta-9+10/b)x^{*2}$ and ${\cal Q}={\cal Q}_0\bigcup {\cal Q}_1^*\bigcup {\cal Q}_2^*$.\\
		$(III)$ If $a=b=2$, we have  $x^*=1/2$, $y^*=2$ and ${\cal Q} = {\cal Q}_0\bigcup {\cal Q}_1^*\bigcup {\cal Q}_2^*$.
	\end{theorem}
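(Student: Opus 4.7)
The plan is to exploit the Kushner-type characterization supplied by Theorem \ref{corol}: ${\cal Q}$ is exactly the set of arrays whose parabola $q_s(x)=c_{s00}+2c_{s01}x+c_{s11}x^2$ touches the upper envelope $r(x)=\max_{s\in{\cal S}}q_s(x)$ at its unique minimizer $x^*$, with common value $y^*=r(x^*)$. Accordingly the argument splits into three tasks: (a) compute $q_s(x)$ explicitly when $\widetilde{B}=B_p$; (b) show that every $s\in\mathcal{M}=\bigcup_{i=0}^{4}{\cal Q}_i$ satisfies $q_s(x^*)=y^*$ at the stated $x^*,y^*$; (c) show that every other array gives a strictly larger value at $x^*$.

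For task (a), with $\widetilde{B}=B_p$ and $F_s=MT_s^0$, I would expand $c_{sij}=\mathrm{tr}(B_t T_s^{(i)\prime}B_p T_s^{(j)})$ into per-plot and per-pair contributions indexed by position type (corner, edge, interior) and by the replication counts $f_{s,i}$ and same-treatment adjacencies. The key structural fact, readily checked from the shape of $M=I_n\otimes[(K_b\otimes I_a)+(K_b^{\prime}\otimes I_a)+(I_b\otimes K_a)+(I_b\otimes K_a^{\prime})]$, is that a corner plot has exactly two neighbors while edge/interior plots have three or four. Writing $q_s(x)=q_0(x)+\Delta_s(x)$, the ``base'' quadratic $q_0(x)$ is the common value on all binary arrays (feasible because $t\ge p$), and $\Delta_s(x)$ is an additive correction over repeated treatments that reflects exactly these position-type weights.

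Because $q_0(x)$ is independent of $s$ on ${\cal Q}_0$, the envelope minimizer for ${\cal Q}_0$ alone is obtained by solving $q_0^{\prime}(x)=0$; for $a\ge 3$ this produces the quadratic $12x^2-(2p-5)x+1=0$, whose smaller root is the stated $x^*$ in (\ref{ir x}), with $y^*=q_0(x^*)$ matching (\ref{y-star:2}) by direct substitution. To reach $\mathcal{M}\subset{\cal Q}$, I would then show that for a single significant treatment the correction $\delta(x)$ satisfies the algebraic identity $\delta(x^*)=0$: this is again a consequence of the fact that the two plots occupied by the repeated treatment are corner-adjacent, so the extra neighbor contributions and the extra same-treatment adjacency cancel at precisely the root of $12x^2-(2p-5)x+1$. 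Since $\Delta_s(x)$ is the sum of such elementary corrections for $s\in{\cal Q}_i$, we conclude $q_s(x^*)=y^*$ for every $s\in\mathcal{M}$.

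The hard part is the converse inclusion ${\cal Q}\subset\mathcal{M}$, requiring strict inequality $q_s(x^*)>y^*$ for every array with either a non-adjacent repetition, an adjacent repetition not involving a corner, or any treatment of multiplicity $\ge 3$. In each case the elementary correction $\delta^{\prime}(x)$ differs from $\delta(x)$ in either the coefficient of $x^2$ (more neighbors adjacent to the repeated plot) or the coefficient of $x$ and the constant term (higher multiplicity), and I would verify by direct computation that these perturbations shift $\delta^{\prime}(x^*)$ strictly above zero; the controlling inequality is that the coefficient of $x^2$ in each elementary correction dominates, while $x^*\in(0,1/2)$ is small. The modifications for parts (II) and (III) come from the fact that when $a=2$ every row plot is simultaneously on a boundary, so only the four true corner plots retain the weight structure used above; this forces the significant pairs in parts (II) and (III) to be strictly significant, and combinatorially rules out ${\cal Q}_3,{\cal Q}_4$ (a $2\times b$ array with $b\ge 3$ admits at most two disjoint corner-adjacent pairs). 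The degenerate case $a=b=2$ is then settled by explicit enumeration, where the envelope simplifies enough to read off $x^*=1/2$ and $y^*=2$ directly. The main obstacle throughout is the case analysis in the strict-inequality step, which is why the full calculation is deferred to the appendix.
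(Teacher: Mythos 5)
Your overall architecture (compute $c_{s00},c_{s01},c_{s11}$ combinatorially, show every array in $\mathcal{M}$ passes through $(x^*,y^*)$, and separate all remaining arrays strictly) matches the paper's, but the step that actually pins down $x^*$ is wrong. You claim $x^*$ is obtained by ``solving $q_0'(x)=0$'' for the common binary-array parabola and that this ``produces the quadratic $12x^2-(2p-5)x+1=0$.'' Setting the derivative of a quadratic to zero gives a \emph{linear} equation, so it cannot produce a quadratic; and the vertex of the $\mathcal{Q}_0$ parabola is in fact \emph{not} $x^*$ --- the paper's proof of Theorem \ref{t=p} explicitly establishes that the vertices $x_0=-c_{s_101}/c_{s_111}$ and $x_2=-c_{s_201}/c_{s_211}$ (for $s_1\in\mathcal{Q}_0$, $s_2\in\mathcal{Q}_2$) lie strictly on opposite sides of $x^*$, via (\ref{Q0})--(\ref{Q1}). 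The minimizer of the upper envelope $r(x)=\max_s q_s(x)$ is a kink where parabolas with slopes of opposite signs cross, not the vertex of any single $q_s$. The correct origin of (\ref{ir x}) is the intersection $q_{s_2}(x)=q_{s_1}(x)$ for $s_1\in\mathcal{Q}_0$, $s_2\in\mathcal{Q}_1$: by Lemma \ref{core parameters} the coefficient differences are $-2/p$, $2-5/p$ and $-12/p$, giving $6x^2-(2p-5)x+1=0$, whose smaller root is the stated $x^*$; your quadratic $12x^2-(2p-5)x+1=0$ does not have (\ref{ir x}) as a root. Relatedly, even after showing that all of $\mathcal{M}$ is concurrent at $(x^*,y^*)$ and everything else is separated, you must still verify the straddling condition $\partial q_{s_1}/\partial x|_{x^*}$ and $\partial q_{s_2}/\partial x|_{x^*}$ have opposite signs for some $s_1,s_2\in\mathcal{M}$ (the paper's part $(i)$); without it you have not shown that $x^*$ minimizes $r$, only that the parabolas meet there.

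A second, independent problem is the direction of the separating inequality. Since $y^*=r(x^*)=\max_{s}q_s(x^*)$, arrays outside $\mathcal{Q}$ must satisfy $q_s(x^*)<y^*$, and this is exactly what Lemmas \ref{lemma 111}--\ref{lemma 333} prove: each local exchange (reducing a multiplicity, restoring connectedness, moving a doubled pair to a corner) strictly \emph{increases} $q_s(x^*)$. You twice require the opposite, $q_s(x^*)>y^*$, for arrays with non-adjacent repetitions or multiplicities $\ge 3$; as written your step (c) would identify $\mathcal{M}$ as the argmin rather than the argmax of $q_s(x^*)$. Finally, the exclusion step is not a single ``the $x^2$ coefficient dominates since $x^*$ is small'' estimate: the paper needs a three-stage reduction with different controlling terms at each stage, and for small $p$ the bounds are settled by enumeration. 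Your outline for parts (II)--(III), that $a=2$ forces significant treatments to be strictly significant and caps their number at two, is sound.
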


\section{Examples}\label{sec:example}
   
This section illuminates the theorems of this paper through some concrete examples. We shall mainly focus on the case when $\Sigma$ is of type-H since we have theoretical form of ${\cal Q}$ and $x^*$ given in Section \ref{sec: seq}. But in general, it is matter of quick computational search based on results from Sections \ref{sec: sym} and \ref{sec: linear}. Specifically, based on Theorem \ref{thm:513}, we can build a Federov's type of exchagne algorithm to derive a measure which maximizes $q_{\xi}^*$. With this measure, we can have $x^*=\arg\min_{x\in \R} q_{\xi}(x)$ and $y^*=q_{\xi}(x^*)$. The set ${\cal Q}$ can hence be obtained by its definition. To this point, there are two ways of deriving optimal or efficient designs. One is to find a proper value of the SBS proportions based on Theorem \ref{corol} and then construct a symmetric (by full permutation) or a pseudo symmetric design (by using $OA_I$). This method needs $n$ to be a multiple of a certain number, see Hedayat and Zheng (2017) for further details. The other is to translate the linear equations in Theorem \ref{optimal:eqn} into an integer quadratic programming problem and try to give an optimal or efficient design for an arbitrary value of $n$, see Zheng (2013) as an example in finding optimal or efficient crossover designs. Also, we shall mention that both methods apply to all combinations of $a,b,t$ and $\Sigma$.

To evaluate the performance of a design, we need to define its statistical efficiency. Let $0\leq \lambda_1 \leq \lambda_2 \leq... \leq \lambda_{t-1}$ be the $t$ eigenvalues of $C_d$ for a design $d$, then we define A-, D- and E- and T-efficiencies of $d$ as follows. 
\begin{eqnarray*}
\varepsilon_{A}(d)&=&\frac{(t-1)^2}{ny^{*}\left(\sum_{i=1}^{t-1}\lambda_i^{-1}\right)},\\
\varepsilon_{D}(d)&=&\frac{t-1}{ny^{*}}  \left(\Pi_{i=1}^{t-1}\lambda_i\right)^{1/(t-1)},\\
\varepsilon_{E}(d)&=&\frac{(t-1)\lambda_1}{ny^{*}},\\
\varepsilon_{T}(d)&=&\frac{\sum_{i=1}^{t-1}\lambda_i}{ny^{*}}.
\end{eqnarray*}
We can see that a (pseudo) symmetric design should have the identical value of efficiency under different criteria. Also, a universally optimal design can be verified to have unity efficiency under those four criteria. Following the structure of Section \ref{sec: seq}, we shall present the examples based on three different caes, namely $t\leq p-2$, $t= p-1$ and $t\geq p$. Some designs in literature are also included in the comparison.

\subsection{The case of $t\leq p-2$}
\begin{example}\label{ex1}
Suppose $(a,b,t)=(2,3,2)$. Based on Theorem \ref{corol}, a symmetric exact design with  $p_{\langle(1,\; 2;\; 2,\; 1;\; 1,\; 2)\rangle}=\frac{1}{8}$ and $p_{\langle(1,\; 1;\; 2,\; 1;\; 2,\; 2)\rangle}=\frac{7}{8}$ will be universally optimal and the minimum value of $n$ should be $16$ to have such design. Meanwhile, by Theorem \ref{optimal:eqn}, we are able to construct universally optimal design with only $n=4$ block arrays as follows.
\[
\begin{bmatrix}
   1&1&2\\
   1&2&2
\end{bmatrix}
\begin{bmatrix}
   1&1&2\\
   1&2&2
\end{bmatrix}
\begin{bmatrix}
   1&1&2\\
   2&1&2
\end{bmatrix}
\begin{bmatrix}
   1&2&1\\
   2&2&1
\end{bmatrix}
\]
\end{example}

\begin{example} Suppose $(a,b,t)=(5,5,5)$. We can find the following design in Langton (1990), who has proposed neighbour balanced Latin square. 
\[s_1=
\begin{bmatrix}
   1&2&3&4&5\\
   4&5&1&2&3\\
   2&3&4&5&1\\
   5&1&2&3&4\\
   3&4&5&1&2   
\end{bmatrix}
\]
A (pseudo) symmetric design based on $s_1$ yields the efficiency of $0.5151$. It sounds unfair to include $s_1$ in the comparison since Langton (1990) actually did not target on any particular model or parameter. Here, we only try to use $s_1$ as the starting point to construct an effcient design. Note that $s_1$ is contained in ${\cal Q}$ in view of (\ref{Q: p-2}). Now define $s_2$, 
\[s_2=
\begin{bmatrix}
   1&2&3&4&5\\
   1&2&3&4&5\\
   1&2&3&4&5\\
   1&2&3&5&4\\
   1&2&3&4&5  
\end{bmatrix}
\]
then a (pseudo) symmetric design with $p_{\langle s_1\rangle}=p_{\langle s_2\rangle}=1/2$ is actually universally optimal.
\end{example}

\begin{example}\label{ex4}
Suppose $(a,b,t)=(6,8,4)$. Similar to Langton (1990), Chan and Eccleston(1998) proposed the following array without referring to any specific model.  
\[s_3=
\begin{bmatrix}
   1&2&4&3&4&1&3&2\\
   1&2&1&4&2&3&1&4\\
   4&1&3&2&1&2&4&3\\
   3&4&2&1&2&3&1&4\\
   4&1&3&2&4&1&3&2\\
   2&3&1&4&3&4&2&1
\end{bmatrix}
\]
The efficiency of a (pseudo) symmetric design based on $s_1$ is 0.6821. Again, $s_3$ is contained in ${\cal Q}$ in view of (\ref{Q: p-2}). Let 
\[s_4=
\begin{bmatrix}
   1&1&2&2&3&3&4&4\\
   1&1&2&2&3&3&4&4\\
   1&1&2&2&3&3&4&4\\
   1&1&2&2&3&3&4&4\\
   1&1&2&2&3&4&4&4\\
   1&1&2&2&3&3&3&4
\end{bmatrix}
\]
Then a (pseudo) symmetric design with $2p_{\langle s_3\rangle}=p_{\langle s_2\rangle}=2/3$ yields the efficiency of $0.9999$.
\end{example}

\subsection{The case of $t= p-1$}
\begin{example}\label{ex2}
Suppose $(a,b,t)=(2,3,5)$. A (pseudo) symmetric design by using only one SBS, $\langle(1\; 1;\; 2\; 3;\; 4\; 5)\rangle$, is universally optimal, and the minimum value of $n$ for such designs is $20$.
\end{example}

\begin{example}
Suppose $(a,b,t)=(3,3,8)$. A (pseudo) symmetric design by using only one SBS, $\langle(1\; 1\; 2;\; 3\; 4\; 5;\; 6\; 7\; 8\;)\rangle$, is universally optimal, and the minimum value of $n$ for such designs is $56$.
\end{example}

\begin{example}
Suppse $(a,b,t)=(3,4,11)$. A (pseudo) symmetric design by using only one SBS, $\langle(1\; 1\; 2;\; 3\; 4\; 5;\; 6\; 7\; 8;\; 9\; 10\; 11)\rangle$, is universally optimal, and the minimum value of $n$ for such designs is $110$.
\end{example}

\subsection{The case of $t\geq p$}
When $t \geq p$, $x^{*}$ is typically irrational for most combinations of $a$ and $b$ according to (\ref{ir x}). Consequence, universally optimal exact designs rarely exist. However, we are able to construct highly efficient designs for any combination of $a$ and $b$. For example, when $(a,b,t)=(2,3,6)$, a (pseudo) symmetric design by using only one SBS, $\langle(1\; 1;\; 2\; 3;\; 4\; 5)\rangle$, yields efficiency of 0.9997. When $(a,b,t)=(3,4,12)$, a (pseudo) symmetric design by using only one SBS, $\langle(1,\; 1,\; 2;\; 3,\; 4,\; 5 ;$ $6,\; 7,\; 8;\; 9,\; 10,\;11)\rangle$, yields efficiency of 0.9999. In fact, highly efficient (pseudo) symmetric designs can always be constructed based on block arrays in ${\cal Q}_1$ or ${\cal Q}_{1}^{*}$. Figure 1 shows the high efficiencies of such designs under different combinations of $a$,$b$,and $t$. In Example 8, we focus on constructing an efficient asymmetric design for an arbitrary number of $n$.
\begin{figure}[H]
\centering
\includegraphics[scale=0.555]{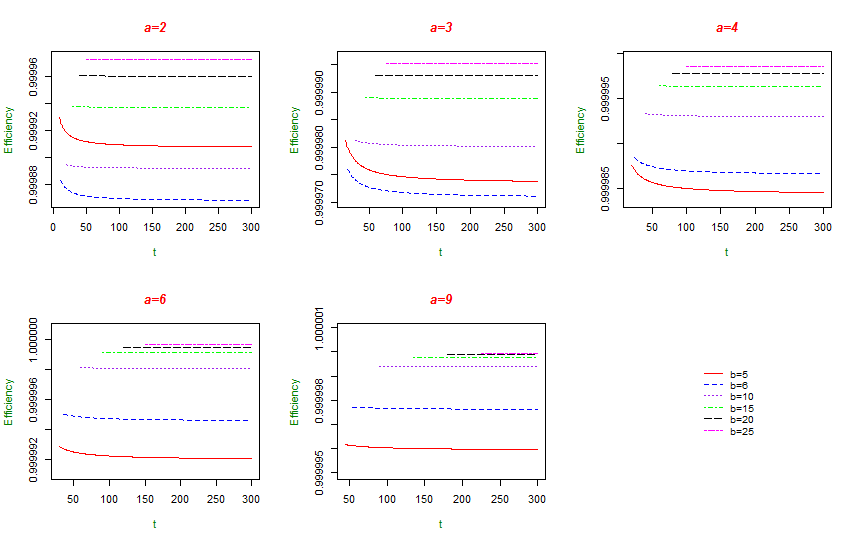}
\caption{Efficiency of pseudo symmetric designs under
various combinations of $a$, $b$, $t$.}
\end{figure}

\begin{example}\label{ex5}
Suppose $(a,b,t)=(4,2,8)$. Uddin and Morgan(1997b) gave the following design. 
\[
\begin{bmatrix}
   8&1\\
   5&7\\
   3&4\\
   6&2
\end{bmatrix}
\begin{bmatrix}
   8&2\\
   6&1\\
   4&5\\
   7&3
\end{bmatrix}
\begin{bmatrix}
   8&3\\
   7&2\\
   5&6\\
   1&4
\end{bmatrix}
\begin{bmatrix}
   8&4\\
   1&3\\
   6&7\\
   2&5
\end{bmatrix}
\begin{bmatrix}
   8&5\\
   2&4\\
   7&1\\
   3&6
\end{bmatrix}
\begin{bmatrix}
   8&6\\
   3&5\\
   1&2\\
   4&7
\end{bmatrix}
\begin{bmatrix}
   8&7\\
   4&6\\
   2&3\\
   5&1
\end{bmatrix}
\]
\[
\begin{bmatrix}
   2&3\\
   7&8\\
   1&5\\
   4&6
\end{bmatrix}
\begin{bmatrix}
   3&4\\
   1&8\\
   2&6\\
   5&7
\end{bmatrix}
\begin{bmatrix}
   4&5\\
   2&8\\
   3&7\\
   6&1
\end{bmatrix}
\begin{bmatrix}
   5&6\\
   3&7\\
   4&1\\
   8&2
\end{bmatrix}
\begin{bmatrix}
   6&7\\
   4&8\\
   5&2\\
   1&3
\end{bmatrix}
\begin{bmatrix}
   7&1\\
   5&8\\
   6&3\\
   2&4
\end{bmatrix}
\begin{bmatrix}
   1&2\\
   6&8\\
   7&4\\
   3&5
\end{bmatrix}
\]
Their model is slightly different in the sense that the side effects are not included in the mean part but the two dimensional layout is accounted by a particular within-block covariance. Note that each block array of this design is contained in ${\cal Q}$ in view of (\ref{Q: p}). The efficiency of it is $\varepsilon_{A}(d)=0.9750$, $\varepsilon_{D}(d)=0.9754$, $\varepsilon_{E}(d)=0.9134$ and $\varepsilon_{T}(d)=0.9759$. By Theorems \ref{optimal:eqn} and \ref{t=p}, we can also construct an alternative design. 
\[
\begin{bmatrix}
   1&1\\
   2&8\\
   3&7\\
   6&4
\end{bmatrix}
\begin{bmatrix}
   6&6\\
   8&1\\
   3&7\\
   5&4
\end{bmatrix}
\begin{bmatrix}
   2&2\\
   5&7\\
   3&1\\
   8&4
\end{bmatrix}
\begin{bmatrix}
   7&7\\
   2&3\\
   1&4\\
   6&8
\end{bmatrix}
\begin{bmatrix}
   5&5\\
   2&8\\
   3&7\\
   6&1
\end{bmatrix}
\begin{bmatrix}
   4&4\\
   2&1\\
   5&3\\
   6&8
\end{bmatrix}
\begin{bmatrix}
   8&7\\
   4&6\\
   2&3\\
   5&1
\end{bmatrix}
\]
\[
\begin{bmatrix}
   8&8\\
   5&1\\
   7&4\\
   2&6
\end{bmatrix}
\begin{bmatrix}
   3&3\\
   1&4\\
   5&6\\
   7&2
\end{bmatrix}
\begin{bmatrix}
   4&5\\
   2&8\\
   3&7\\
   6&1
\end{bmatrix}
\begin{bmatrix}
   5&6\\
   3&8\\
   4&1\\
   7&2
\end{bmatrix}
\begin{bmatrix}
   6&7\\
   4&8\\
   5&2\\
   1&3
\end{bmatrix}
\begin{bmatrix}
   7&1\\
   5&8\\
   6&3\\
   2&4
\end{bmatrix}
\begin{bmatrix}
   1&2\\
   6&8\\
   7&4\\
   3&5
\end{bmatrix}
\]
The efficiency of it is $\varepsilon_{A}(d)=0.9792$, $\varepsilon_{D}(d)=0.9806$, $\varepsilon_{E}(d)=0.9002$ and $\varepsilon_{T}(d)=0.9820$. As mentioned earlier, $x^{*}$ is irrational due to (\ref{ir x}), hence a universally optimal exact design does not exist anyway. That indicates both Uddin and Morgan(1997b)'s and our design performs reasonably well here.
\end{example}

\section{Appendix}

This section proves the results in Section \ref{sec: seq}. We would like to briefly explain the structure of this section.  Lemma \ref{core parameters} calculates the coefficients of $q_s(x)$ as defined in Theorem \ref{thm:513}, which is repeatedly needed in the rest of this section. Lemmas \ref{lemma 11}--\ref{lemma 33} are technical results for proving Theorem \ref{t=p-1} and Lemmas \ref{lemma 111}--\ref{lemma 333} are technical results for proving Theorem \ref{t=p}. 

To proceed, we shall define some technical notations. Given an array $s$, recall that $t(i,j)$ is the treatment at the $(i,j)$th location. For treatment $m$, define $f_{s,m}^{1}=\sum_{i=1}^{a}\sum_{j=1}^{b-1}\mathbb{I}_{[t(i,j)=m]}$,
 $f_{s,m}^{2}=\sum_{i=1}^{a}\sum_{j=2}^{b}\mathbb{I}_{[t(i,j)=m]}$,
 $f_{s,m}^{3}=\sum_{i=1}^{a-1}\sum_{j=1}^{b}\mathbb{I}_{[t(i,j)=m]}$,
 $f_{s,m}^{4}=\sum_{i=2}^{a}\sum_{j=1}^{b}\mathbb{I}_{[t(i,j)=m]}$. They are numbers of replications of treatment $m$ in various areas of the array. Here, there is an abuse of the notation $m$. In the previous sections, it represents the total number of SBS and here it represents a treatment index. With $h_{s}^{i,j} = \sum_{m=1}^{t}f_{s,m}^{i}f_{s,m}^{j},$ $0\leq i, j\leq 4$, we further define
 $h_{s}^{1} = \sum_{j=1}^{4}h_{s}^{0,j}$,
 $h_{s}^{2} = \sum_{i=1}^{4}h_{s}^{i,i}$,
 $h_{s}^{3} = \sum_{1\leq i< j \leq 4}h_{s}^{i,j}$.
 Also, let $\rho_s=\sum_{m=1}^{t}\mathbb{I}_{[f_{s,m}^{0}>0]}$ be the number of different treatments in array $s$, and 
 $\mathcal{N}_i=\{s\in \mathcal{S}: \rho_s=i\}$, $1\leq i \leq p$, be the collection of arrays in which there are $i$ different treatments. The following notations are merely technical without obvious interpretations.

\begin{center}
\begin{tabular}{ c c }
 $z_{s,r,m}^{1} =\sum_{i=1}^{a}\sum_{j=1}^{b-1}\mathbb{I}_{[t(i,j)=t(i,j+1)=m]}$ & $z_{s,r,m}^{2} =\sum_{i=1}^{a}\sum_{j=1}^{b-2}\mathbb{I}_{[t(i,j)=t(i,j+2)=m]}$\\ 
 $z_{s,c,m}^{1} =\sum_{i=1}^{a-1}\sum_{j=1}^{b}\mathbb{I}_{[t(i,j)=t(i+1,j)=m]}$ & $z_{s,c,m}^{2} =\sum_{i=1}^{a-2}\sum_{j=1}^{b}\mathbb{I}_{[t(i,j)=t(i+2,j)=m]}$\\  
 $z_{s,r}^{1} = \sum_{m=1}^{t}z_{s,r,m}^{1}$ & $z_{s,r}^{2} =\sum_{m=1}^{t}z_{s,r,m}^{2}$\\ 
  $z_{s,c}^{1} = \sum_{m=1}^{t}z_{s,c,m}^{1}$ & $z_{s,c}^{2}= \sum_{m=1}^{t}z_{s,c,m}^{2}$\\
 $z_{s,d,m}^{1}=\sum_{i=2}^{a}\sum_{j=1}^{b-1}\mathbb{I}_{[t(i,j)=t(i-1,j+1)=m]}$ & $z_{s,d,m}^{2}=\sum_{i=1}^{a-1}\sum_{j=1}^{b-1}\mathbb{I}_{[t(i,j)=t(i+1,j+1)=m]}$\\
 $z_{s,d}^{1}=\sum_{m=1}^{t}z_{s,d,m}^{1}$ & $z_{s,d}^{2}=\sum_{m=1}^{t}z_{s,d,m}^{2}$\\
 $z_{s}^{1} = 2z_{s,r}^{1}+2z_{s,c}^{1}$ & $z_{s}^{2} = 2z_{s,r}^{2}+2z_{s,c}^{2}+4z_{s,d}^{1}+4z_{s,d}^{2}$
\end{tabular}
\end{center}

\begin{lemma}\label{core parameters}
Given an array $s$, we have $c_{s00} = p-h_{s}^{0,0}/p$, $c_{s01} = z_{s}^{1}-h_{s}^{1}/p$ and $c_{s11} = \eta+z_{s}^{2}-h_{s}^{2}/p-2h_{s}^{3}/p$. Recall, $\eta=4p-2a-2b-2(8ab-7a-7b+4)/t+4(2p-a-b)^{2}/(pt)$
\end{lemma}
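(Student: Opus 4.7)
The plan is to plug $\tilde B = B_p = I_p - J_p/p$ (valid because $\Sigma=I_{ab}$ is type-H) into the definitions of $C_{s00},C_{s01},C_{s11}$ and then expand $c_{sij}=\mathrm{tr}(B_tC_{sij})=\mathrm{tr}(C_{sij})-t^{-1}1_t'C_{sij}1_t$. The two key identities that drive the simplification are the structural ones $T_s^{0}1_t=1_p$ (each plot has exactly one treatment) and $B_p1_p=0$; together these force the $J_t$ correction to drop out for $C_{s00}$ and $C_{s01}$, leaving only the Frobenius-type trace terms.

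For $c_{s00}$, I would first observe $C_{s00}1_t=T_s^{0\prime}B_p1_p=0$, so $c_{s00}=\mathrm{tr}(T_s^{0\prime}T_s^0)-p^{-1}\mathrm{tr}(T_s^{0\prime}J_pT_s^0)$. The diagonal of $T_s^{0\prime}T_s^0$ is $(f_{s,m}^0)_m$ with trace $\sum_m f_{s,m}^0=p$, and $\mathrm{tr}(T_s^{0\prime}J_pT_s^0)=(1_p'T_s^0)(T_s^{0\prime}1_p)=\sum_m (f_{s,m}^0)^2=h_s^{0,0}$, giving the stated formula. For $c_{s01}$, the same cancellation (via $1_p'B_p=0$) leaves $c_{s01}=\mathrm{tr}(T_s^{0\prime}F_s)-p^{-1}\mathrm{tr}(T_s^{0\prime}J_pF_s)$. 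I would then decompose $F_s=\sum_{h=1}^4 T_s^h$ and identify each $\mathrm{tr}(T_s^{0\prime}T_s^h)$ as the count of same-treatment pairs of plots adjacent in the corresponding direction, so $\mathrm{tr}(T_s^{0\prime}T_s^1)=\mathrm{tr}(T_s^{0\prime}T_s^2)=z_{s,c}^1$ and $\mathrm{tr}(T_s^{0\prime}T_s^3)=\mathrm{tr}(T_s^{0\prime}T_s^4)=z_{s,r}^1$, summing to $z_s^1$. For the $J_p$ piece, $1_p'T_s^h$ equals the vector $(f_{s,m}^i)_m$ for the appropriate index $i\in\{1,2,3,4\}$ (restricted column/row counts), so the inner product with $(f_{s,m}^0)_m$ yields $\sum_{j=1}^4 h_s^{0,j}=h_s^1$.

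The real work is $c_{s11}$, where the $J_t$ correction no longer vanishes because $F_s1_t$ is the vector of neighbor counts $n_{ij}\in\{2,3,4\}$ (corner/edge/interior). I would write
\[
c_{s11}=\mathrm{tr}(F_s'F_s)-\tfrac1p\mathrm{tr}(F_s'J_pF_s)-\tfrac1t\,1_t'F_s'F_s1_t+\tfrac1{tp}1_t'F_s'J_pF_s1_t,
\]
and evaluate the four pieces. For the first, expanding $F_s=\sum_hT_s^h$ produces sixteen terms; the four diagonal ones $\|T_s^h\|_F^2$ sum to $2(a-1)b+2a(b-1)=4p-2a-2b$, and the twelve off-diagonal ones split naturally into pairs corresponding to the four types of distance-two coincidences (horizontal, vertical, and the two diagonals), assembling exactly to $z_s^2=2z_{s,r}^2+2z_{s,c}^2+4z_{s,d}^1+4z_{s,d}^2$. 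The $J_p$ piece uses $1_p'F_s=\sum_i(f_{s,m}^i)_m$ giving $\mathrm{tr}(F_s'J_pF_s)=\bigl(\sum_i f_{s,m}^i\bigr)^2$ summed over $m$, i.e. $h_s^2+2h_s^3$. Finally, the two purely geometric constants come from counting: $\sum_{ij}n_{ij}^2=16ab-14a-14b+8$ (by partitioning plots into $4$ corners, $2(a+b-4)$ edge plots, and $(a-2)(b-2)$ interior plots) and $\sum_{ij}n_{ij}=4p-2a-2b$, so the last two corrections combine to $-2(8ab-7a-7b+4)/t+4(2p-a-b)^2/(pt)$. Adding everything yields exactly $\eta+z_s^2-h_s^2/p-2h_s^3/p$.

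The main obstacle will be the bookkeeping in $c_{s11}$: identifying which of the twelve off-diagonal $\mathrm{tr}(T_s^{h\prime}T_s^h)$ terms contributes to which of $z_{s,r}^2,z_{s,c}^2,z_{s,d}^1,z_{s,d}^2$, and verifying the combinatorial constant $\eta$ from the corner/edge/interior decomposition. The Kronecker representations $T_s^1=(I_b\otimes K_a)T_s^0$, etc., make this systematic — products such as $(I_b\otimes K_a')(K_b\otimes I_a)=K_b\otimes K_a'$ show that off-diagonal terms across $\{1,2\}\times\{3,4\}$ encode diagonal adjacencies — but the accounting is long and is the reason the paper defers these calculations to the appendix.
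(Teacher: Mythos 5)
Your proposal is correct and follows essentially the same route as the paper: a direct trace computation with $\tilde B=B_p=I_p-J_p/p$ and $B_t=I_t-J_t/t$, using $T_s^01_t=1_p$ and $1_p'B_p=0$ to kill the $J_t$ corrections in $c_{s00}$ and $c_{s01}$, and then expanding $F_s=\sum_{h=1}^4T_s^h$ term by term for $c_{s11}$. The only difference is one of completeness: the paper records just one representative cross term $tr(T_s^{1'}B_pT_s^3B_t)$ and leaves the remaining bookkeeping implicit, whereas you carry out the full corner/edge/interior accounting for $\eta$, which the paper's stated result confirms.
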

\begin{proof}
By direct calculations, we have
\begin{eqnarray*}
c_{s00}&=&tr(T_{s}^{'}B_pT_{s})=tr(T_{s}^{'}T_{s})-tr(T_{s}^{'}1_p1_{p}^{'}T_s)/p=p-h_{s}^{0,0}/p,\\
c_{s01}&=&tr(T_{s}^{'}B_pF_{s})=\sum_{1\leq i\leq 4}tr(T_{s}^{0'}B_pT_{s}^i)=\sum_{1\leq i\leq 4}\left[tr(T_{s}^{0'}T_{s}^i)-tr(T_{s}^{0'}1_p1_{p}^{'}T_{s}^i)/p\right]\\
&=&2z_{s,r}^{1}+2z_{s,c}^{1}-(\sum_{1\leq i\leq 4}h_{s}^{0,i})/p=z_{s}^{1}-h_{s}^{1}/p,\\
c_{s11} &=& tr(F_{s}^{'}B_pF_{s}B_t)=\sum_{1\leq i\leq 4}tr(T_{s}^{i'}B_pT_{s}^{i}B_t)+2\sum_{1\leq i< j \leq 4}tr(T_{s}^{i'}B_pT_{s}^{j}B_t).
\end{eqnarray*}
To complete the calculation for $c_{s11}$, we will examine one term in the above expression.
\begin{eqnarray*}
&&tr(T_{s}^{1'}B_pT_{s}^{3}B_t)\\
&=&tr(T_{s}^{1'}T_{s}^{3})-tr(T_{s}^{1'}1_p1_{p}^{'}T_{s}^3)/p-tr(T_{s}^{1'}T_{s}^{3}1_t1_{t}^{'})/t+tr(T_{s}^{1'}1_p1_{p}^{'}T_{s}^31_t1_{t}^{'})/(pt)\\
&=&z_{s,d}^{1}-h_{s}^{1,3}/p-(a-1)(b-1)/t+(p-a)(p-b)/(pt).
\end{eqnarray*}
\end{proof}

\begin{proof}[\textbf{Proof of Theorem \ref{t=p-2}}]
It is sufficient to verify that $(i)$  $\max_{s\in {\cal S}}q_s(0)=p-[p^2+r(t-r)]/(pt)$ and $\mathcal{Q}^{*}$ is the set of arrays at which the maximum is achieved. $(ii)$ $\partial q_{s}(x)/\partial x|_{x=0}=c_{s01}\geq 0$ for some $s \in \mathcal{Q}^{*}$. $(iii)$ $\partial q_{s}(x)/\partial x|_{x=0}=c_{s01}<0$ for some other $s \in \mathcal{Q}^{*}$.

For the special case of $a=b=2$, it is straightforward to verify $(i)$--$(iii)$. Particularly, the sequences satisfying the conditions $(ii)$ and $(iii)$ are $(1,2;2,1)$ and $(1,1;2,2)$ respectively. In the sequel, we consider $b\geq 3$. Part $(i)$ is straightforward in view of the facts  $\mathcal{Q}^{*}=\{s: |f_{s,i}^0-f_{s,j}^0|\leq 1, 1\leq i, j \leq t\}$ and $q_s(0) = p-h_{s}^{0,0}/p$.

Part $(ii)$. Given an array $s$ and treatment $m$, define $\Lambda_{s,m}=\{(i,j): t(i,j)=m\}$. We say treatment $m$ is $connected$ in $s$ if for any $(i,j), (i',j')\in \Lambda_{s,m}$, there exists an array of  positions $(i_1,j_1),(i_2,j_2),\cdots,(i_l,j_l)\in \Lambda_{s,m}$ such that $(i,j)=(i_1,j_1)$, $(i',j')=(i_l,j_l)$ and $|i_k-i_{k-1}|+|j_k-j_{k-1}|\leq1$ for $2 \leq k\leq l$. By convention, we also call treatment $m$ to be $connected$ if $|\Lambda_{s,m}|\leq 1$. Let $\Lambda^{*}=\{(1,1),(1,b),(a,1),(a,b)\}$, we show that part $(ii)$ is satisfied by any array $s$ $\in$ $\mathcal{Q}^{*}$, such that all treatments in $s$ are connected, $f_{s,m}^{0}=q+1$ for $1\leq m\leq r$ and particularly $\Lambda_{s,m}\cap\Lambda^*\neq\phi$ for m=1,2. Recall that $c_{s01} = z_{s}^{1}-h_{s}^{1}/p$. By induction, one can show that $z_{s,c,m}^{1}+z_{s,r,m}^{1} \geq f_{s,m}^{0}-1$.
Hence $z_{s}^{1}=2\sum_{m=1}^{t}(z_{s,c,m}^{1}+z_{s,r,m}^{1})\geq 2(p-t)$, together with the fact that $p=\sum_{m=1}^{t}f_{s,m}^{0}$, we have $c_{s01}\geq\sum_{m=1}^{t}f_{s,m}^{0}(2-\sum_{j=1}^{4}f_{s,m}^{j}/p)-2t:= f$. Next, we will show $f\geq0$ in three separate cases. Case $(a)$, $r\geq 2$. We have $f\geq \sum_{m=1}^{t}q(2-\sum_{j=1}^{4}f_{s,m}^{j}/p)+\sum_{m=1}^{2}(2-\sum_{j=1}^{4}f_{s,m}^{j}/p)-2t=2(q-1)(t-2)+q(2a+2b)/p-2(0+1+2+2)/p\geq0$. Case $(b)$, $r=1$. Since $p\geq 6$ and $p-t\geq 2$, we have $q\geq 2$ and $t+q\geq 5$, thus $f= \sum_{m=1}^{t}q(2-\sum_{j=1}^{4}f_{s,m}^{j}/p)+\sum_{m=1}^{1}(2-\sum_{j=1}^{4}f_{s,m}^{j}/p)-2t=2(q-1)(t-2)+q(2a+2b)/p-5/p-2>0$. Case $(c)$, $r=0$. Since $p\geq 6$ and $p-t\geq 2$, we have $q\geq 2$ and $t+q\geq 5$, thus $f=2(q-1)(t-2)-4+(2a+2b)/t>0$.

Part $(iii)$. Recall $c_{s01} = z_{s}^{1}-h_{s}^{1}/p$ by Lemma \ref{core parameters} and notice the fact that $h_s^1>0$ for any array, part $(iii)$ will be verified if we can find an array $s$ such that $z_{s}^{1}=0$. Recall $z_{s}^{1}=2\sum_{m=1}^{t}\sum_{i=1}^{a}\sum_{j=1}^{b-1}\mathbb{I}_{[t(i,j)=t(i,j+1)=m]}+2\sum_{m=1}^{t}\sum_{i=1}^{a-1}\sum_{j=1}^{b}\mathbb{I}_{[t(i,j)=t(i+1,j)=m]}$. Hence $z_{s}^{1}=0$ could be achieved by any array in which no treatment is assigned to any neighboring plots of the block array.




\end{proof}

\begin{proof}[\textbf{Proof of Theorem \ref{t=p-1}}]
We shall only prove the theorem for the case of $a\geq 3$, since the case of $a=2$ follows from a similar but a lot simpler argument. Let $x^{*}=(p-a-b+5/2)/(\eta p-16p+14a+14b-20)$, it is sufficient to show that $(i)$ Given an array $s\in {\cal Q}_1$, $x=x^{*}$ is the minimizer of $q_s(x)$. $(ii)$ $\max_{s\in {\cal}}q_s(x^*)$ equal to the right hand side of (\ref{eqn:y-star}) and the maximum is achieved by an array if and only if it belongs to $\mathcal{Q}_{1}$.

By Lemma \ref{core parameters}, we have $c_{s00}= p-(p+2)/p$, $c_{s01}  = (2a+2b-5)/p-2$ and $c_{s11} =\eta-(16p-14a-14b+20)/p$ for $s\in {\cal Q}_1$. Then the verification of $(i)$ is left to some simple algebra. Part $(ii)$ is a direct result of Lemmas \ref{lemma 11}, \ref{lemma 22} and \ref{lemma 33}.
\end{proof}

\begin{proof}[\textbf{Proof of Theorem \ref{t=p}}]
Similar to the Proof of Theorem \ref{t=p-1}, here we shall only give the proof for the case of $a\geq 3$. Let $x^*=(2p-5-\sqrt{(2p-5)^2-24})/12$, it is sufficient to show that $(i)$ When $x=x^{*}$, there exist $s_1, s_2 \in {\cal M}$, such that $\partial q_{s_1}(x)/\partial x|_{x=x^{*}} > 0 > \partial q_{s_2}(x)/\partial x|_{x=x^{*}}$. $(ii)$ When $x=x^{*}$, the value of $y^*$ in (\ref{y-star:2}) is the maximum value of $q_s(x)$ and ${\cal M}$ is the set of arrays at which the maximum is achieved.  

Part $(ii)$ is a direct result of Lemmas \ref{lemma 111}, \ref{lemma 222} and \ref{lemma 333}.  In the sequel, we shall focus on part $(i)$. Let $s_1 \in {\cal Q}_0$ and $s_2 \in {\cal Q}_2$, by Lemma \ref{core parameters}, we have
$c_{s_100}=p-1$, $c_{s_101}=-(4p-2a-2b)/p$, $c_{s_111}=\eta-(16p-14a-14b+8)/p$, $c_{s_200}=p-(p+2)/p$, $c_{s_201}=(2a+2b-10)/p$ and $c_{s_211}=\eta-(16p-14a-14b+32)/p$. Let $x_0=-c_{s_101}/c_{s_111}$ and $x_2=-c_{s_201}/c_{s_211}$, by Lemma \ref{positive}, it is sufficient to show that $x_0<x^{*}<x_2$ or equivalently:
\begin{eqnarray}
\frac{(-2a-2b+10)/p}{\eta-(16p-14a-14b+32)/p}&<&\frac{(2p-5)-\sqrt{(2p-5)^2-24}}{12}\label{Q0},\\
\frac{(2p-a-b)/p}{\eta-(16p-14a-14b+8)/p}&>&\frac{(2p-5)-\sqrt{(2p-5)^2-24}}{12}\label{Q1}.
\end{eqnarray}
(\ref{Q0}) follows by the fact that the left part is negative while it is positive on the right.
For (\ref{Q1}), let $\eta=h(t)$, then we have $\partial h(t)/\partial t=2t^{-2}[a^2(b-2)+b^2(a-2)](ab)^{-1}>0$, hence
\begin{eqnarray*}
\frac{(2p-a-b)/p}{\eta-(16p-14a-14b+8)/p}&\geq& \frac{(2p-a-b)/p}{h(\infty)-(16p-14a-14b+8)/p}\\
&>&\frac{(2p-5)-\sqrt{(2p-5)^2-24}}{12}
\end{eqnarray*}

\end{proof}

\begin{lemma}\label{lemma 11}
When $a\geq 3$ and $t=p-1$. Let $\ddot{\mathcal{Q}_1}=\{s\in \mathcal{S}, \rho_s=p-1\}$.
Then for any $s\in \mathcal{S}\setminus \ddot{\mathcal{Q}_1}$, there exists $s^*\in \ddot{\mathcal{Q}_1}$ such that $q_{s^*}(x^*)>q_{s}(x^*)$, where $x^*$ is given by (\ref{eqn:x-star}).
\end{lemma}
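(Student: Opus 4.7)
The plan is a monotone local-swap argument that iteratively reduces the number of repeated treatments in $s$. Given $s \notin \ddot{\mathcal{Q}_1}$, i.e.\ $\rho_s \leq p-2$, the pigeonhole principle (using $t = p-1$) guarantees there exists a treatment $m$ with $f_{s,m}^0 \geq 2$ and an unused treatment $m'$ with $f_{s,m'}^0 = 0$. I would define $s'$ by replacing one occurrence of $m$ at a carefully chosen cell $(i_0,j_0)$ by $m'$, show that $q_{s'}(x^*) > q_s(x^*)$, and then iterate. Because each swap strictly increases $\rho$, after at most $p-1-\rho_s$ steps the process terminates at some $s^* \in \ddot{\mathcal{Q}_1}$ satisfying $q_{s^*}(x^*) > q_s(x^*)$ by transitivity.

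The first technical step is to verify $x^* > 0$ from (\ref{eqn:x-star}): the numerator $p - (a+b-5/2) = ab - a - b + 5/2$ is positive whenever $a \geq 3$, and the denominator is positive because the dominant term of $\eta p$ is of order $p^2$ while $16p - 14a - 14b + 20$ is of order $p$. The next step is to track, via Lemma \ref{core parameters}, the differences $\Delta c_{s ij} = c_{s'ij} - c_{sij}$ induced by the swap. The leading term is
\[
\Delta c_{s00} \;=\; -\Delta h_s^{0,0}/p \;=\; (2 f_{s,m}^0 - 2)/p \;\geq\; 2/p,
\]
while the changes $\Delta c_{s01}$ and $\Delta c_{s11}$ are local to the neighborhood of $(i_0,j_0)$: the $z$-terms change only through adjacencies involving $(i_0,j_0)$, and the $h$-terms change only through $f_{s,m}^j$ and $f_{s,m'}^j$. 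By choosing $(i_0,j_0)$ to be an isolated copy of $m$ (i.e.\ with no neighboring cell carrying $m$, which can be arranged whenever a single treatment appears at least twice) and, when possible, near the boundary, one keeps these local changes bounded by absolute constants.

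The crux is to confirm $\Delta q := \Delta c_{s00} + 2\Delta c_{s01}\, x^* + \Delta c_{s11}\, (x^*)^2 > 0$. Because $x^* = O(1/p)$ as $p$ grows (from the explicit formula in (\ref{eqn:x-star})) while $\Delta c_{s00} \geq 2/p$, the positive contribution from $\Delta c_{s00}$ should dominate the bounded contributions weighted by $x^*$ and $(x^*)^2$. The main obstacle in the argument will be the careful bookkeeping of $\Delta h_s^2$ and $\Delta h_s^3$, because these quadratic sums can change by an amount proportional to $f_{s,m}^0$ when the repeated treatment has very high replication. In that regime, I would handle the case by first performing swaps on such heavily replicated treatments one copy at a time (so that $f_{s,m}^0$ shrinks between successive steps) and by choosing $(i_0,j_0)$ to be a corner (minimising the cell degree), which suppresses the magnitudes of $\Delta z_s^1$ and $\Delta z_s^2$. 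A uniform bound of this form, combined with the estimate $x^* = O(1/p)$, then delivers $\Delta q > 0$ in every case, completing the iterative construction and thereby the proof.
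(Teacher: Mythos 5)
Your overall strategy coincides with the paper's: an iterative single-cell replacement that raises $\rho_s$ by one at each step, with the gain $\Delta c_{s00}\geq 2/p$ asked to dominate the $x^*$-weighted losses, followed by transitivity. However, there are two concrete gaps. First, the claim that one can always choose $(i_0,j_0)$ to be an \emph{isolated} copy of $m$ is false: if every repeated treatment occurs exactly twice and each such pair occupies adjacent cells (e.g.\ in a $3\times 3$ block with treatment $1$ at $(1,1),(1,2)$, treatment $2$ at $(2,1),(2,2)$, and the remaining five cells all distinct, so $\rho_s=p-2$), then no copy of any repeated treatment is isolated. This is precisely the hard case, because removing a non-isolated copy destroys an adjacency and forces $\Delta c_{s01}$ to be genuinely negative (as low as $-2$ when $f_{s,m}^0=2$). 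The paper instead picks the copy maximizing $i+j$ and bounds each $z$-increment below by $-1$, accepting the loss rather than assuming it away.

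Second, once $\Delta c_{s01}$ can equal $-2$, the dominance step cannot be run with unspecified ``absolute constants'' and a generic $x^*=O(1/p)$: the inequality needed in the $f_{s,m}^0=2$ case is $2/p>4x^*(1+x^*)$, and even with the sharp bound $0<x^*<(3p-22)^{-1}$ established in the paper, the margin at $p=24$ is only about two percent ($2/24\approx 0.0833$ versus $4(51/50)/50\approx 0.0816$). A constant of $3$ in place of $2$ in the lower bound on $\Delta c_{s01}$ would sink the argument, so the exact case split on $f_{s,m}^0$ (the paper's cases $f=2$ and $f\geq 3$, with matched triples of bounds for $\Delta c_{s00},\Delta c_{s01},\Delta c_{s11}$) is indispensable, not bookkeeping that can be deferred. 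Relatedly, the asymptotic reasoning says nothing for small blocks: the paper must verify $p<24$ (which includes $a=b=3$) by direct evaluation over the finitely many $(a,b,t)$, a step absent from your plan. Your skeleton is the right one, but closing it requires exactly these quantitative refinements.
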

\begin{proof}
It is sufficient to show that for any array $s_1$ with $\rho_{s_1}<p-1$, there exists an array $s_2$ such that $\rho_{s_2}=\rho_{s_1}+1$ and $q_{s_2}(x^*)>q_{s_1}(x^*)$. 

For an array $s_1$ with $\rho_{s_1}<p-1$, we can always find a treatment $m_1$ and another treatment $m_2$,  such that $f_{s_1,m_1}^{0}\geq 2$ and $f_{s_1,m_2}^{0}=0$. Let $(i^{'},j^{'})\in \Lambda_{s_1,m_1}$, such that for any $(i,j)\in \Lambda_{s_1,m_1}$ we have  $i^{'}+j^{'}> i+j$, or $i^{'}+j^{'}=i+j$ with $i^{'}>i$. Let $s_2$ be the new array obtained from $s_1$ by setting $t(i^{'},j^{'})=m_2$ and others remain unchanged. By the definition of $\rho_s$, we have $\rho_{s_2}=\rho_{s_1}+1$. In the rest of the proof we show $q_{s_2}(x^*)>q_{s_1}(x^*)$ in separate cases.

Case $(i)$, $p\geq 24$. Recall $x^*=[1-1/a-1/b+5/(2p)]/(\eta -16+14/a+14/b-20/p)$. For the numerator of $x^*$, we have $0<1-1/a-1/b+5/(2p)<1$. For the denominator of $x^*$, we have $\eta-(16-14/a-14/b+20/p)
>4p-2a-2b-16/a-16/b+8/(ab)-(16-14/a-14/b+20/p)
>(4p-12/p)-(2a+2/a)-(2b+2/b)-16
>(a-2)(b-2)+3ab-23
>3p-22$.
Hence, $0<x^*<(3p-22)^{-1}$. Next, We shall compare the coefficients of $q_{s_2}(x^*)$ and $q_{s_1}(x^*)$.
By Lemma \ref{core parameters}, $c_{s00} = p-h_{s}^{0,0}/p$, $c_{s01} = z_{s}^{1}-h_{s}^{1}/p$ and $c_{s11} = \eta+z_{s}^{2}-h_{s}^{2}/p-2h_{s}^{3}/p$. One can verify that $h_{s_1}^{i}-h_{s_2}^{i}\geq 0$ for $1\leq i \leq 3$. For example, $h_{s_1}^{1}-h_{s_2}^{1}=\sum_{j=0}^{4}[(f_{s_1,m_1}^{j}-f_{s_2,m_2}^{j})+f_{s_2,m_1}^{0}(f_{s_1,m_1}^{j}-f_{s_2,m_1}^{j})] \geq 0$. Hence
\begin{eqnarray*}
c_{s_200}-c_{s_100}&=&2(f_{s_1,m_1}^{0}-1)/p\\
c_{s_201}-c_{s_101}&\geq& 2(z_{s_2,c,m_1}^{1}-z_{s_1,c,m_1}^{1})+ 2(z_{s_2,r,m_1}^{1}-z_{s_1,r,m_1}^{1})\\
c_{s_211}-c_{s_111}&\geq &2(z_{s_2,c,m_1}^{2}-z_{s_1,c,m_1}^{2})+2(z_{s_2,r,m_1}^{2}-z_{s_1,r,m_1}^{2})\\
&&+ 4(z_{s_2,d,m_1}^{1}-z_{s_1,d,m_1}^{1})+4(z_{s_2,d,m_1}^{2}-z_{s_1,d,m_1}^{2}).
\end{eqnarray*}
By the relationship between $s_1$ and $s_2$, we have $z_{s_2,o,m_1}^{i}-z_{s_1,o,m_1}^{i}\geq -1$, $i=1,2$, $o=c,r,d$. Now we are ready to show $q_{s_2}(x^*) -q_{s_1}(x^*)>0$.
Case $(a)$, $f_{s_1,m_1}^{0}= 2$. We have $c_{s_200}-c_{s_100}=2/p$,
$c_{s_201}-c_{s_101}\geq -2$ and 
$c_{s_211}-c_{s_111}\geq -4$, which together with $0<x^*<(3p-22)^{-1}$ yield $q_{s_2}(x^*) -q_{s_1}(x^*)>0$. Case $(b)$,  $f_{s_1,m_1}^{0}\geq3$. We have $c_{s_200}-c_{s_100}\geq 4/p$, $c_{s_201}-c_{s_101}\geq -4$ and
$c_{s_211}-c_{s_111}\geq -12$, which also gives $q_{s_2}(x^*) -q_{s_1}(x^*)>0$.

Case $(ii)$, $p<24$. There are only finite many combinations of $(a,b,t)$ and the values of $x^*$ as well as $q_{s_1}(x^*)$ and $q_{s_2}(x^*)$ can all be explicitly evaluated. We have verified the statement in this theorem for all these specific combinations.


\end{proof}

\begin{lemma}\label{lemma 22}
When $a\geq 3$ and $t=p-1$. Let $\dot{\mathcal{Q}_1}=\{s: s\in \ddot{\mathcal{Q}_1}$, all treatments are connected in s\}. For any $s\in \ddot{\mathcal{Q}_1}\setminus \dot{\mathcal{Q}_1}$ and $s^*\in \dot{\mathcal{Q}_1}$, we have $q_{s^*}(x^*)>q_{s}(x^*)$, where $x^*$ is given by (\ref{eqn:x-star}).
\end{lemma}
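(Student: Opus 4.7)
The plan is to reduce $q_{s^*}(x^*) - q_{s}(x^*)$ to a manageable expression using Lemma \ref{core parameters}, and then show it is positive by comparing the orders of magnitude of its linear and quadratic parts in $x^*$. Since every array in $\ddot{\mathcal{Q}_1}$ has the same treatment-multiplicity profile (one doubled treatment $m^{*}$ and $p-2$ singletons), the invariant $h_{s}^{0,0}=p+2$ is a global constant, so $c_{s00}$ agrees on $s$ and $s^{*}$. The whole difference therefore collapses to $2(c_{s^{*}01}-c_{s01})x^{*} + (c_{s^{*}11}-c_{s11})(x^{*})^{2}$, and my goal is to show the linear part dominates.

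For the linear coefficient I would split each $h_{s}^{0,j}$ with $j\geq 1$ into the contribution of $m^{*}$ and the contribution of the singletons; the singletons collectively contribute $|\text{region}_{j}|-f_{s,m^{*}}^{j}$, since each singleton plot lies in region $j$ iff its unique position does. Summing over $j$ yields the clean identity $h_{s}^{1}=4p-2(a+b)+\chi_{s}$, where $\chi_{s}$ is the sum of the in-block neighbor counts of the two plots occupied by $m^{*}$, hence $c_{s01}=z_{s}^{1}-4+2(a+b)/p-\chi_{s}/p$. Because $z_{s^{*}}^{1}=2$ while $z_{s}^{1}=0$, and because the per-plot neighbor count lies in $\{2,3,4\}$, I obtain the uniform bound $c_{s^{*}01}-c_{s01}\geq 2-4/p$, strictly positive since $a\geq 3$ forces $p\geq 9$.

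For the quadratic coefficient I would carry out an analogous but more laborious expansion of $h_{s}^{2}$ and $h_{s}^{3}$, expressing each as a constant determined by region-overlap sizes plus extra pieces that depend only on $\chi_{s}$, the individual $f_{s,m^{*}}^{i}$, and their pairwise products; all these extra pieces are uniformly bounded in $p$. Moreover $z_{s^{*}}^{2}=0$ (adjacent plots cannot also be at distance two) while $z_{s}^{2}\leq 4$. Hence $|c_{s^{*}11}-c_{s11}|=O(1)$ with an explicit constant. On the other hand, inspection of (\ref{eqn:x-star}) together with $\eta=4p-2(a+b)+O(1/p)$ shows $x^{*}=O(1/p)$, so the linear contribution to $q_{s^{*}}(x^{*})-q_{s}(x^{*})$ is $\Omega(1/p)$ while the quadratic correction is at most $O(1/p^{2})$. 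The linear term therefore dominates for all sufficiently large $p$.

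The main obstacle is the bookkeeping in $c_{s11}$: the precise values of $f_{s,m^{*}}^{i}f_{s,m^{*}}^{j}$ depend on whether each of the two plots of $m^{*}$ sits at a corner, an edge, or the interior, and in the disconnected case $z_{s}^{2}$ must also be tracked across the three sub-cases (same row at distance two, same column at distance two, or diagonal). Once the asymptotic threshold is pinned down, the remaining small values of $p$ would be cleared by direct enumeration of all feasible $(a,b,t)$ combinations, mirroring Case $(ii)$ in the proof of Lemma \ref{lemma 11}.
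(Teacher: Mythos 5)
Your proposal is correct and follows essentially the same route as the paper's own proof: both fix $c_{s00}$ as constant across $\ddot{\mathcal{Q}_1}$, establish the lower bound $c_{s^*01}-c_{s01}\geq 2-4/p$ from $z_{s^*}^1-z_s^1=2$, bound the quadratic coefficient $c_{s^*11}-c_{s11}$ below by an absolute constant (the paper uses $-4-30/p$), and conclude via $0<x^*<(3p-22)^{-1}$ that the linear term dominates, with small $p$ handled by the enumeration inherited from Lemma \ref{lemma 11}. Your explicit identity $h_s^1=4p-2(a+b)+\chi_s$ is just a more detailed derivation of the same bound the paper asserts directly.
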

\begin{proof}
By a similar argument as in the proof of Lemma \ref{lemma 11}, we have $0<x^*<(3p-22)^{-1}$,
$c_{s^*00}-c_{s00}=0$, 
$c_{s^*01}-c_{s01}=z_{s^*}^{1}-z_{s}^{1}-(h_{s^*}^{1}-h_{s}^{1})/p\geq 2-4/p$ and 
$c_{s^*11}-c_{s11}=z_{s^*}^{2}-z_{s}^{2}-(h_{s^*}^{2}-h_{s}^{2})/p-2(h_{s_2}^{3}-h_{s_1}^{3})/p\geq -4-30/p$. Hence, $q_{s^*}(x^*)-q_{s}(x^*)\geq (-4-30/p)x^{*2}+2(2-4/p)x^*=x^*[4p-8-(4p+30)x^*]/p>0$
\end{proof}

\begin{lemma} \label{lemma 33}
When $a\geq 3$ and $t=p-1$.  For any $s\in \dot{\mathcal{Q}_1}\setminus \mathcal{Q}_1$ and  $s^*\in \mathcal{Q}_1$, we have $q_{s^*}(x^*)>q_{s}(x^*)$, where $x^*$ is given by (\ref{eqn:x-star}).
\end{lemma}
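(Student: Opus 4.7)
The plan is to compare $q_{s^*}(x^*)$ and $q_s(x^*)$ by a direct calculation of the coefficient differences via Lemma \ref{core parameters}, exploiting the many structural properties shared by $s\in\dot{\mathcal{Q}_1}\setminus\mathcal{Q}_1$ and $s^*\in\mathcal{Q}_1$. Both arrays have exactly one treatment $m_1$ appearing twice in adjacent positions, with all other treatments appearing once, so $h_s^{0,0}=h_{s^*}^{0,0}=p+2$, $z_s^1=z_{s^*}^1=2$, and $z_s^2=z_{s^*}^2=0$ (the two plots of $m_1$ are at distance one, never at distance two). This immediately gives $c_{s00}=c_{s^*00}$ and reduces the task to comparing the linear and quadratic coefficients.

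The remaining dependence of $c_{s01},c_{s11}$ on $s$ is only through the pair $(p_1,p_2)$ of positions holding $m_1$. I would introduce two position statistics $c(p)=\sum_{i=1}^{4}\mathbb{I}[p\in R_i]$, counting how many of the four index sets $R_1,\ldots,R_4$ appearing in the definitions of $f_{s,m}^1,\ldots,f_{s,m}^4$ contain the plot $p$, and $o(p_1,p_2)=\sum_{i=1}^{4}\mathbb{I}[p_1\in R_i]\mathbb{I}[p_2\in R_i]$. A short bookkeeping argument gives
\begin{eqnarray*}
h_s^1 &=& C_1+c(p_1)+c(p_2),\\
h_s^2 &=& C_2+2\,o(p_1,p_2),\\
h_s^3 &=& C_3+c(p_1)c(p_2)-o(p_1,p_2),
\end{eqnarray*}
where $C_1,C_2,C_3$ depend only on the grid and hence cancel when forming differences between $s$ and $s^*$.

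Now $c(p)=2$ at corners, $3$ on non-corner edges, and $4$ in the interior. For $s^*\in\mathcal{Q}_1$ with $a\geq 3$, one plot of $m_1$ is a corner and the adjacent plot is a non-corner edge, yielding $(c(p_1),c(p_2),o)=(2,3,2)$. For $s\in\dot{\mathcal{Q}_1}\setminus\mathcal{Q}_1$ neither plot is a corner, so $c(p_k)\in\{3,4\}$; a brief enumeration of the three configurations edge-edge $(3,3,3)$, edge-interior $(3,4,3)$, and interior-interior $(4,4,4)$ shows that in every case the triple $(c(p_1)+c(p_2),\,2o,\,c(p_1)c(p_2)-o)$ strictly dominates the value $(5,4,4)$ attained at $s^*$, the weakest gain coming from the edge-edge case. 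Plugging back into Lemma \ref{core parameters} yields
\begin{eqnarray*}
c_{s^*01}-c_{s01} &\geq& 1/p > 0,\\
c_{s^*11}-c_{s11} &\geq& 2/p+2\cdot 2/p = 6/p > 0.
\end{eqnarray*}

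Since $x^*>0$ (both numerator and denominator of (\ref{eqn:x-star}) are positive, as already verified in the proof of Lemma \ref{lemma 11}) and $c_{s^*00}=c_{s00}$, we conclude
\[
q_{s^*}(x^*)-q_s(x^*)=2(c_{s^*01}-c_{s01})x^*+(c_{s^*11}-c_{s11})(x^*)^2 > 0.
\]
The argument is cleaner than that of Lemma \ref{lemma 22} because here both the first-order and second-order coefficient differences have the favorable sign simultaneously, so no delicate cancellation bound on $x^*$ is required. The only mild obstacle is ensuring the case enumeration is complete: edge-edge pairs require $b\geq 4$ when horizontal or $a\geq 4$ when vertical, so on the smallest grid $(a,b)=(3,3)$ only edge-interior pairs appear; but the bounds above hold uniformly across every subcase, so this is only a bookkeeping issue rather than a substantive one.
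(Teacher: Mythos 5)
Your proposal is correct and follows essentially the same route as the paper's own proof: both compare the coefficients of $q_s$ and $q_{s^*}$ via Lemma \ref{core parameters}, observe that $c_{s00}=c_{s^*00}$ and that the $z$-terms coincide, reduce the comparison to the sign of the differences $h_s^{i}-h_{s^*}^{i}$, and conclude from $x^*>0$. The only difference is that the paper merely asserts $h_{s^*}^{i}<h_s^{i}$ while you verify it explicitly through the corner/edge/interior position counts, which is a welcome (and correct) filling-in of detail rather than a new argument.
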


\begin{proof}
By a similar argument as in the proof of Lemma \ref{lemma 11}, we have $0<x^*<(3p-22)^{-1}$, $c_{s^*00}-c_{s00}=0$, $c_{s^*01}-c_{s01}=z_{s^*}^{1}-z_{s}^{1}-(h_{s^*}^{1}-h_{s}^{1})/p=-(h_{s^*}^{1}-h_{s}^{1})/p> 0$, and $c_{s^*11}-c_{s11}=z_{s^*}^{2}-z_{s}^{2}-(h_{s^*}^{2}-h_{s}^{2})/p-2(h_{s^*}^{3}-h_{s}^{3})/p=-(h_{s^*}^{2}-h_{s}^{2})/p-2(h_{s^*}^{3}-h_{s}^{3})/p> 0$. Hence, $q_{s^*}(x^*)-q_{s}(x^*)>0$.
\end{proof}

\begin{lemma} \label{lemma 111}
When $a\geq 3$ and$t\geq p$. Let $\mathcal{M}_2=\{s\in\mathcal{Q}: f_{s,m}^{0}\leq 2, 1\leq m\leq t\}$, then for any $s\in \mathcal{S}\setminus \mathcal{M}_2$, there exists $s^*\in \mathcal{M}_2$ such that $q_{s^*}(x^*)>q_{s}(x^*)$, where $x^*$ is given by (\ref{ir x}).
\end{lemma}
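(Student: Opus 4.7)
The plan is to mimic the inductive exchange argument of Lemma \ref{lemma 11}: it suffices to show that whenever $s \notin \mathcal{M}_2$, one can replace a single entry of $s$ by a previously unused treatment to obtain an array $s^*$ with strictly larger $q(x^*)$, and strictly smaller maximum multiplicity. Iterating this exchange terminates in $\mathcal{M}_2$ while only increasing the objective, so combined with Lemmas \ref{lemma 222} and \ref{lemma 333} it delivers the theorem.

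First I would obtain a working bound on $x^*$ from (\ref{ir x}). Rationalizing gives $x^* = 2\bigl/\bigl[(2p-5)+\sqrt{(2p-5)^2-24}\bigr]$, so $0 < x^* < 1/(2p-6)$ for $p$ not too small, i.e.\ $x^* = O(1/p)$. Next, given $s \notin \mathcal{M}_2$, pick any treatment $m_1$ with $f_{s,m_1}^{0} \geq 3$; since $t \geq p$ and at least two plots are absorbed by the repetitions of $m_1$, the array uses at most $p-2 < t$ distinct treatments, so some $m_2$ with $f_{s,m_2}^{0}=0$ exists. Choose the position $(i',j') \in \Lambda_{s,m_1}$ that is extremal in the colexicographical sense (the same choice used in Lemma \ref{lemma 11}) and let $s^*$ be obtained by resetting $t(i',j')=m_2$.

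Third, I would compare coefficients through Lemma \ref{core parameters}. Because $m_2$ was unused, no cross terms $h_{s^*}^{i,j}$ involving $m_2$ contribute in $s^*$, and the contributions from $m_1$ drop monotonically; the same bookkeeping used in Lemma \ref{lemma 11} yields
\begin{eqnarray*}
c_{s^*00}-c_{s00} &=& \tfrac{2}{p}\bigl(f_{s,m_1}^{0}-1\bigr) \;\geq\; \tfrac{4}{p},\\
c_{s^*01}-c_{s01} &\geq& -4,\\
c_{s^*11}-c_{s11} &\geq& -12,
\end{eqnarray*}
where the last two bounds come from the fact that the exchange removes at most two row/column adjacencies and at most a bounded number of diagonal adjacencies involving $m_1$ (identical counts to Case $(b)$ of Lemma \ref{lemma 11}). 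Hence
\begin{equation*}
q_{s^*}(x^*)-q_{s}(x^*) \;\geq\; \tfrac{4}{p}-8x^*-12(x^*)^2,
\end{equation*}
and the bound $x^* < 1/(2p-6)$ makes the right-hand side strictly positive once $p$ is larger than an absolute constant.

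The main obstacle will be the small-$p$ regime, exactly as in Lemma \ref{lemma 11}. I would single out an explicit threshold (mirroring the $p<24$ split used there), below which only finitely many triples $(a,b,t)$ with $a\geq 3$ and $t\geq p$ occur, and handle those triples by direct numerical verification of $q_{s^*}(x^*) > q_s(x^*)$ for the finitely many local exchange types. The rest of the argument — showing that the exchange strictly reduces the multiset of multiplicities in a well-founded order so the iteration ends in $\mathcal{M}_2$ — is straightforward since the multiplicity of $m_1$ drops by one and no other multiplicity grows above $2$ unless it was already at least $2$ (in which case we repeat on a different $m_1$).
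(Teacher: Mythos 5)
Your overall strategy is the same as the paper's: a single-entry exchange replacing one occurrence of an over-replicated treatment $m_1$ (chosen at the colexicographically extremal position of $\Lambda_{s,m_1}$) by an unused treatment $m_2$, coefficient comparison via Lemma \ref{core parameters}, and iteration until the multiplicities drop to at most $2$. The gap is in the final quantitative step. By merging the two multiplicity regimes you pair the \emph{weakest} gain, $c_{s^*00}-c_{s00}\geq 4/p$ (attained when $f_{s,m_1}^{0}=3$), with the \emph{weakest} loss bound, $c_{s^*11}-c_{s11}\geq -12$ (needed only when $f_{s,m_1}^{0}\geq 4$), arriving at $q_{s^*}(x^*)-q_s(x^*)\geq 4/p-8x^*-12(x^*)^2$. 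This lower bound is negative for \emph{every} $p$, not just small $p$: from (\ref{ir x}), $x^*=2/\bigl[(2p-5)+\sqrt{(2p-5)^2-24}\bigr]>1/(2p-5)$, hence $8x^*>8/(2p-5)=4/(p-5/2)>4/p$, so already the linear term swamps the constant term. Your claim that the right-hand side becomes positive ``once $p$ is larger than an absolute constant'' is therefore false, and the fallback of finite verification for small $p$ cannot repair it, since no $p$ survives the general bound.

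The paper avoids this by splitting into the cases $f_{s_1,m_1}^{0}=3$ (gain exactly $4/p$, but loss in $c_{11}$ only $\geq -8$) and $f_{s_1,m_1}^{0}\geq 4$ (loss $\geq -12$, but gain $\geq 6/p$), so the stronger gain is always matched against the stronger loss. To fix your argument you must at minimum reinstate this case split; and even then the arithmetic is delicate (e.g.\ for $f^{0}_{s,m_1}=3$ the bound $4/p-8x^*-8(x^*)^2$ is still not positive on its face), so you will also need to exploit the trade-off your uniform bounds ignore: at the extremal position only the two neighbours $(i'-1,j')$ and $(i',j'-1)$ can carry $m_1$, and when both do (which is what forces $c_{s^*01}-c_{s01}$ down to $-4$ for a triple), none of the distance-two or diagonal positions can carry $m_1$, so the loss in $c_{11}$ essentially vanishes. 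Sharpening the joint bound on $(c_{s^*01}-c_{s01},\,c_{s^*11}-c_{s11})$ along these lines, rather than bounding each coordinate separately at its individual worst case, is what the argument actually requires.
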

\begin{proof}

 For an array $s_1\in \mathcal{S}\setminus \mathcal{M}_2$, by the fact that $t\geq p$ we can always find a treatment $m_1$ and another treatment $m_2$,  such that $f_{s_1,m_1}^{0}\geq 3$ and $f_{s_1,m_2}^{0}=0$. Let $(i^{'},j^{'})\in \Lambda_{s_1,m_1}$, such that for any $(i,j)\in \Lambda_{s_1,m_1}$ we have  $i^{'}+j^{'} > i+j$, or $i^{'}+j^{'}=i+j$ while $i^{'}>i$. Let $s_2$ be the new array obtained from $s_1$ by letting $t(i^{'},j^{'})=m_2$ and others remain unchanged. Here, it is sufficient to show $q_{s_2}(x^*)>q_{s_1}(x^*)$. Case $(a)$, $f_{s_1,m_1}^{0}=3$. We have $c_{s_200}-c_{s_100}= 4/p$, $c_{s_201}-c_{s_101}\geq-4$ and $c_{s_211}-c_{s_111}\geq-8$. Case $(b)$, $f_{s_1,m_1}^{0}\geq4$. We have $c_{s_200}-c_{s_100}\geq 6/p$, $c_{s_201}-c_{s_101}\geq-4$ and $c_{s_211}-c_{s_111}\geq-12$. As a result, we have $q_{s_2}(x^*)>q_{s_1}(x^*)$.
\end{proof}

\begin{lemma}\label{lemma 222}
When $a\geq 3$ and $t\geq p$. Let $\mathcal{M}_1$=$\{s:s\in\mathcal{M}_2$, all treatments are connected in $s$\}.
For any $s\in \mathcal{M}_2\setminus \mathcal{M}_1$, there exists $s^*\in \mathcal{M}_1$ such that $q_{s^*}(x^*)>q_{s}(x^*)$, where $x^*$ is given by (\ref{ir x}).
\end{lemma}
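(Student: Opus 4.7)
The plan is to construct $s^*\in\mathcal{M}_1$ from $s$ by a one-shot splitting procedure: using the pool of unused treatments guaranteed by $t\geq p$, I would replace one occurrence of every disconnected twice-used treatment with a fresh singleton. Let $D=\{m:f_{s,m}^0=2 \text{ and the two occurrences of }m\text{ in }s\text{ are non-adjacent}\}$; by hypothesis $|D|\geq 1$. Write $p_1^m, p_2^m$ for the two occurrences of $m\in D$, and let $k_2$ be the total number of twice-used treatments in $s$. The number of treatments unused by $s$ is $t-(p-k_2)=t-p+k_2\geq k_2\geq |D|$, so I may assign to each $m\in D$ a distinct unused treatment $m'(m)$. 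Define $s^*$ by $s^*(p_2^m)=m'(m)$ for every $m\in D$ and $s^*(q)=s(q)$ otherwise. Then $s^*\in\mathcal{M}_2$, and every twice-used treatment in $s^*$ was already a connected twice-used treatment in $s$, so $s^*\in\mathcal{M}_1$.

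I would then compare $(c_{sij})$ with $(c_{s^*ij})$ via Lemma \ref{core parameters}. The multiset change $2\mapsto(1,1)$ at each $m\in D$ yields $h_{s^*}^{0,0}=h_s^{0,0}-2|D|$, hence $c_{s^*00}-c_{s00}=2|D|/p>0$. For every $(i,j)\neq(0,0)$, write $\beta_i^m,\alpha_i^m\in\{0,1\}$ for the contributions of $p_1^m$ and $p_2^m$ to $f_{s,m}^i$; the identity $(\beta_i+\alpha_i)(\beta_j+\alpha_j)-(\beta_i\beta_j+\alpha_i\alpha_j)=\beta_i\alpha_j+\alpha_i\beta_j\geq 0$ gives $h_{s^*}^{i,j}\leq h_s^{i,j}$, so $h_{s^*}^k\leq h_s^k$ for $k=1,2,3$. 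Disconnectedness of $m\in D$ forces no $m$-neighbor at $p_2^m$ in $s$, and the replacement $m'(m)$ is new, so no horizontal or vertical same-treatment adjacency is created or destroyed; thus $z_{s^*}^1=z_s^1$ exactly, and $c_{s^*01}-c_{s01}=-(h_{s^*}^1-h_s^1)/p\geq 0$. For $z^2$, each disconnected pair at Manhattan distance $2$ (horizontal, vertical, or diagonal) contributes at most $4$ to $z_s^2$, and this contribution vanishes in $s^*$, while no new distance-two same-treatment pair is created; hence $z_{s^*}^2\geq z_s^2-4|D|$, and the $h^2,h^3$ part of $c_{s^*11}-c_{s11}$ is still non-negative.

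Combining these estimates,
\begin{align*}
q_{s^*}(x^*)-q_s(x^*)&=(c_{s^*00}-c_{s00})+2x^*(c_{s^*01}-c_{s01})+x^{*2}(c_{s^*11}-c_{s11})\\
&\geq \frac{2|D|}{p}-4|D|\,x^{*2}.
\end{align*}
From (\ref{ir x}) and the elementary inequality $\sqrt{a^2-b}\geq a-b/(2a)$ applied with $a=2p-5$, $b=24$, one obtains $x^*\leq 1/(2p-5)$. The hypothesis $a\geq 3$ together with $b\geq a$ gives $p=ab\geq 9$, whence $4x^{*2}\leq 4/(2p-5)^2<2/p$ (since $(2p-5)^2>2p$ holds already at $p=9$ and grows faster thereafter). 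The right-hand side above is therefore strictly positive, so $q_{s^*}(x^*)>q_s(x^*)$.

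The principal obstacle is the one potentially unfavorable term, the drop in $z^2$ when disconnected pairs happen to lie at Manhattan distance exactly $2$. What makes the one-shot splitting work without any finer swap-based construction is the order-of-magnitude disparity: the gain in $c_{s00}$ is of order $1/p$ per disconnected pair, while the worst-case loss in $c_{s11}$ is only of order $1$ per pair and is multiplied by $x^{*2}=O(1/p^2)$. Absent the bound $x^*=O(1/p)$ coming from (\ref{ir x}), one would be forced to iterate local swaps while tracking a monovariant such as the total Manhattan distance $\sum_{m:f_{s,m}^0=2}d_M(\Lambda_{s,m})$, which would substantially complicate the bookkeeping.
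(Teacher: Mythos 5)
Your proof is correct and follows essentially the same route as the paper's: replace one occurrence of each disconnected twice-replicated treatment by a fresh unused treatment (available because $t\ge p$), compare the coefficients $c_{s00}$, $c_{s01}$, $c_{s11}$ via Lemma \ref{core parameters}, and absorb the worst-case loss of at most $4$ in $c_{s11}$ per pair into the gain $2/p$ in $c_{s00}$ using the smallness of $x^*$. The only differences are cosmetic: you perform all the splits in one shot and supply the explicit bound $x^*\le 1/(2p-5)$ needed to close the estimate, whereas the paper replaces one treatment at a time (implicitly iterating) and asserts the final inequality $-4x^{*2}+8x^{*}/p+2/p>0$ without spelling out that bound.
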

\begin{proof}
Let $s_1\in \mathcal{M}_1\setminus \mathcal{M}_2$, since $t\geq p$, we can always find a treatment $m_1$ and another treatment $m_2$,  such that $m_1$ not connected, $f_{s_1,m_1}^{0}=2$ and $f_{s_1,m_2}^{0}=0$. Let $s_2$ be the new array obtained from $s_1$ by replacing one replication of $m_1$ to $m_2$ and others remain unchanged. Here, it is sufficient to show that  and $q_{s_2}(x^*)>q_{s_1}(x^*)$. We have $c_{s_200}-c_{s_100}=2/p$, $c_{s_201}-c_{s_101}=(h_{s_1}^{1}-h_{s_2}^{1})/p\geq 4/p$ and $c_{s_211}-c_{s_111}\geq z_{s_2}^{2}-z_{s_1}^{2}\geq-4$. Hence $q_{s_2}(x^*) -q_{s_1}(x^*)>-4x^{*2}+8x^*/p+2/p>0$.
\end{proof}

\begin{lemma}\label{lemma 333}
When $a\geq 3$ and $t\geq p$. For any $s\in \mathcal{M}_1\setminus \mathcal{M}$, there exists $s^*\in \mathcal{M}$ such that $q_{s^*}(x^*)>q_{s}(x^*)$, where $x^*$ is given by (\ref{ir x}).
\end{lemma}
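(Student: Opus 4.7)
The plan follows the inductive-reduction template of Lemmas \ref{lemma 111}--\ref{lemma 222}: given $s\in\mathcal{M}_{1}\setminus\mathcal{M}$, I produce an array $\tilde{s}$ obtained from $s$ by a single elementary swap, show that $\tilde{s}\in\mathcal{M}_{1}$ has strictly fewer ``bad'' doubled treatments than $s$, and establish $q_{\tilde{s}}(x^{*})>q_{s}(x^{*})$. Iterating this step eventually lands in $\mathcal{M}$, and chaining the strict inequalities yields the required $s^{*}\in\mathcal{M}$ with $q_{s^{*}}(x^{*})>q_{s}(x^{*})$.

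The elementary swap is defined as follows. Any $s\in\mathcal{M}_{1}$ has each treatment appearing at most twice, and any doubled treatment has its two copies in adjacent plots. Since $\mathcal{M}=\bigcup_{i=0}^{4}\mathcal{Q}_{i}$ and an $a\times b$ block has only four corners, the hypothesis $s\notin\mathcal{M}$ forces the existence of a doubled treatment $m_{1}$ whose two adjacent copies both lie in non-corner plots; such an $m_{1}$ fails to be significant. Using $t\geq p$ and the fact that $m_{1}$ is doubled (so $\rho_{s}\leq p-1\leq t-1$), choose $m_{2}$ with $f_{s,m_{2}}^{0}=0$, and let $\tilde{s}$ be obtained from $s$ by relabelling the copy of $m_{1}$ at the lexicographically largest position as $m_{2}$. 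In $\tilde{s}$ both $m_{1}$ and $m_{2}$ are singletons, so $\tilde{s}\in\mathcal{M}_{1}$; the significance status of every other doubled treatment is untouched, and the count of non-significant doubled treatments has strictly decreased.

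The inequality $q_{\tilde{s}}(x^{*})>q_{s}(x^{*})$ is obtained by direct bookkeeping through Lemma \ref{core parameters}. The swap gives $h_{\tilde{s}}^{0,0}-h_{s}^{0,0}=-2$, hence $c_{\tilde{s}00}-c_{s00}=2/p$. Because the broken $m_{1}m_{1}$ adjacency is not replaced by any new $z^{1}$ pair in $\tilde{s}$, a calculation entirely analogous to that in the proof of Lemma \ref{lemma 33} yields $c_{\tilde{s}01}-c_{s01}\geq -(h_{\tilde s}^{1}-h_{s}^{1})/p\geq 0$ up to a term of order $1/p$, and a parallel count of the increments to $z_{s}^{2},h_{s}^{2},h_{s}^{3}$ gives $c_{\tilde{s}11}-c_{s11}\geq -C$ for an explicit constant $C$. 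Combined with the bound $0<x^{*}<2/(4p-15)$ that follows from rewriting (\ref{ir x}) as $x^{*}=2/[(2p-5)+\sqrt{(2p-5)^{2}-24}]$, the quadratic $q_{\tilde{s}}(x^{*})-q_{s}(x^{*})=(c_{\tilde s00}-c_{s00})+2(c_{\tilde s01}-c_{s01})x^{*}+(c_{\tilde s11}-c_{s11})x^{*2}$ is dominated by the $2/p$ term, hence is strictly positive.

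The main obstacle is the case-by-case bookkeeping of the adjacency counts $z_{s}^{1},z_{s}^{2},h_{s}^{2},h_{s}^{3}$, since the plot carrying the replaced copy of $m_{1}$ may be interior, on an edge, or adjacent to the second copy either horizontally or vertically, and each configuration produces a slightly different increment. The lexicographic choice ensures that at most one of the two adjacencies incident to the removed plot is the $m_{1}m_{1}$ adjacency itself, keeping the bookkeeping tractable. As in the proof of Lemma \ref{lemma 11}, any small values of $p$ for which the asymptotic bound on $x^{*}$ is insufficient will be handled by direct finite verification of the remaining configurations.
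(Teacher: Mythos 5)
Your overall strategy is the same as the paper's: locate a doubled treatment $m_1$ that is not significant (both copies off the corners), replace one copy by an unused treatment $m_2$, and compare the coefficients of $q_s$ via Lemma \ref{core parameters}; your structural observations (such an $m_1$ must exist when $s\in\mathcal{M}_1\setminus\mathcal{M}$, and $\tilde{s}\in\mathcal{M}_1$ after the swap) are correct. The gap is in your bookkeeping of $c_{s01}$. Since $s\in\mathcal{M}_1$ the two copies of $m_1$ occupy adjacent plots, so the swap destroys an $m_1m_1$ adjacency and $z^{1}_{\tilde{s}}-z^{1}_{s}=-2$; hence $c_{\tilde{s}01}-c_{s01}=-2-(h^{1}_{\tilde{s}}-h^{1}_{s})/p$, and since each non-corner plot is counted in at least three of $f^{1},\dots,f^{4}$, the correct estimate is $c_{\tilde{s}01}-c_{s01}\geq -2+6/p$ --- not ``$\geq 0$ up to a term of order $1/p$'' as you assert. (The paper also gets $c_{\tilde{s}11}-c_{s11}\geq 0$ exactly, because adjacent copies contribute nothing to $z^{2}$, which is cleaner than your $\geq -C$.)

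This error matters because it hides the actual crux of the lemma. With the corrected increment, the linear term $2(c_{\tilde{s}01}-c_{s01})x^{*}\approx -4x^{*}$ is of the \emph{same} order $1/p$ as the constant gain $2/p$, so positivity of $q_{\tilde{s}}(x^{*})-q_{s}(x^{*})\geq 2/p+2(6/p-2)x^{*}$ is not ``dominated by the $2/p$ term''; it requires the sharp bound $x^{*}<1/(2p-6)$. Your stated bound $x^{*}<2/(4p-15)$ is true but strictly weaker, since $2/(4p-15)>1/(2p-6)$, and therefore does not close the argument. The needed bound does hold: writing $x^{*}=2/\bigl[(2p-5)+\sqrt{(2p-5)^{2}-24}\,\bigr]$ from (\ref{ir x}), the inequality $x^{*}<1/(2p-6)$ is equivalent to $\sqrt{(2p-5)^{2}-24}>2p-7$, i.e.\ to $8p>48$, which is satisfied because $a\geq 3$ and $b\geq a$ give $p=ab\geq 9$. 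You need to supply this computation (or an equivalent one) for the proof to stand; as written, the quantitative heart of the lemma is both misstated and unverified.
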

\begin{proof}
Let $s_1\in \mathcal{M}_1\setminus \mathcal{M}$, then  by definition we can always find a treatment $m_1$ and another treatment $m_2$,  such that $f_{s_1,m_1}^{0}=2$, $\Lambda_{s,m_1}\cap\Lambda^*=\phi$ and $f_{s_1,m_2}^{0}=0$. Let $s_2$ be the new array obtained from $s_1$ by replacing one replication of $m_1$ to $m_2$ and others remain unchanged. Here, it is sufficient to show that $q_{s_2}(x^*)>q_{s_1}(x^*)$. We have $c_{s_200}-c_{s_100}=2/p$, 
$c_{s_201}-c_{s_101}\geq -2+6/p$ and 
$c_{s_211}-c_{s_111}=(h_{s_1}^{2}-h_{s_2}^{2})/p+2(h_{s_1}^{3}-h_{s_2}^{3})/p\geq 0$. Thus $q_{s_2}(x^*) -q_{s_1}(x^*)>2(6/p-2)x^*+2/p>0$.
\end{proof}

\bigskip
\bigskip

\bigskip

\noindent{\bf References}

\bigskip

\par\noindent\hangindent2.3em\hangafter 1
Ai, M., Ge, G. and Chan, L. (2007). Circular neighbor-balanced designs universally optimal for total effects. {\it Sci. China Ser.} A {\bf 50} 821--828. 

\par\noindent\hangindent2.3em\hangafter 1
Ai, M., Yu, Y. and He, S. (2009). Optimality of circular neighbor-balanced designs for total effects with autoregressive correlated observations. {\it J. Statist. Plann. Inference} {\bf 139} 2293--2304.

\par\noindent\hangindent2.3em\hangafter 1
Bailey, R. A. and Druilhet, P. (2004). Optimality of neighbor-balanced designs for total effects. {\it Ann. Statist.} {\bf 32} 1650--1661.

\par\noindent\hangindent2.3em\hangafter 1
Dette, H. and Schorning, K. (2013). Complete classes of designs for nonlinear regression models and principal representations of moment spaces. {\it Ann. Statist.}
{\bf 41} 1260--1267.

\par\noindent\hangindent2.3em\hangafter 1
Druilhet, P. (1999). Optimality of neighbour balanced designs. {\it J. Statist. Plann. Inference}  {\bf 81} 141--152.

\par\noindent\hangindent2.3em\hangafter 1
Druilhet, P. and Tinsson, W. (2012). Efficient circular neighbour designs for spatial interference model. {\it J. Statist. Plann. Inference} {\bf 142} 1161--1169.

\par\noindent\hangindent2.3em\hangafter 1
Gill, P. S. (1993). Design and analysis of field experiments incorporating local and remote effects of treatments. {\it Biom. J.} {\bf 35} 343--354.

\par\noindent\hangindent2.3em\hangafter 1
Hedayat, A. S., and Zheng, W. (2017). The story of symmetry in constructing crossover designs. {\it Manusript}.

\par\noindent\hangindent2.3em\hangafter 1
Federer, W. T. and Basford K. E. (1991). Competing effects designs and models for two-dimensional field arrangements. {\it International Biometric Society} {\bf 47} 1461--1472.

\par\noindent\hangindent2.3em\hangafter 1
Filipiak, K. (2012). Universally optimal designs under an interference model with equal left- and right-neighbor effects. {\it Statist. Probab. Lett.} {\bf 82} 592--598.

\par\noindent\hangindent2.3em\hangafter 1
Filipiak, K. and Markiewicz, A. (2003). Optimality of neighbor balanced designs under mixed effects model. {\it Statist. Probab. Lett.} {\bf 61} 225--234. 

\par\noindent\hangindent2.3em\hangafter 1
Filipiak, K. and Markiewicz, A. (2005). Optimality and efficiency of circular neighbor balanced designs for correlated observations. {\it Metrika} {\bf 61} 17--27. 

\par\noindent\hangindent2.3em\hangafter 1
Filipiak, K. and Markiewicz, A. (2007). Optimal designs for a mixed interference model. {\it Metrika} {\bf 65} 369--386.

\par\noindent\hangindent2.3em\hangafter 1
Karlin, S. and Studden, W. (1966).  Tchebycheff systems: With applications in analysis and statistics. {\it Interscience}, New York.

\par\noindent\hangindent2.3em\hangafter 1
Kiefer, J. (1975). Construction and optimality of generalized Youden designs. {\it A Survey of Statistical Design and Linear Models (J. N. Srivistava, ed.)}. North-Holland,
Amsterdam.

\par\noindent\hangindent2.3em\hangafter 1
Kunert, J. (1984). Optimality of balanced uniform repeated measurements designs. {\it Ann. Statist.} {\bf 12} 1006--1017.

\par\noindent\hangindent2.3em\hangafter 1
Kunert, J. and Martin, R. J. (2000). On the determination of optimal designs for an interference model. {{\it Ann. Statist.} {\bf 28} 1728--1742.

\par\noindent\hangindent2.3em\hangafter 1
Kunert, J. and Mersmann, S. (2011). Optimal designs for an interference model. {\it J.Statist. Plann. Inference} {\bf 141} 1623--1632.

\par\noindent\hangindent2.3em\hangafter 1
Kushner, H. B. (1997). Optimal repeated measurements designs: The linear optimality equations. {\it Ann. Statist.} {\bf 25} 2328--2344.

\par\noindent\hangindent2.3em\hangafter 1
Langton, S. (1990). Avoiding edge effects in agroforestry experiments; the use of neighbour-balanced designs and guard areas. {\it Agroforestry Systems} {\bf 12} 173--185.

\par\noindent\hangindent2.3em\hangafter 1
Li, K., Zheng, W. and Ai, M. (2015). Optimal designs for the proportional interference model. {\it Ann. Statist.} {\bf 43} 1596--1616.

\par\noindent\hangindent2.3em\hangafter 1
Morgan, J. P. and Uddin, N. (1991) Two-dimensional design for correlatted errors. {\it Ann. Statist.} {\bf 19} 2160--2182

\par\noindent\hangindent2.3em\hangafter 1
Morgan, J. P. and Uddin, N. (1999) A class of neighbor balanced complete block designs and their efficiencies for spatially correlated errors. {\it Statistics} {\bf 32} 317--330.

\par\noindent\hangindent2.3em\hangafter 1
Uddin, N. and Morgan, J. P. (1997a) Efficient block designs for setting with spatially correlated errors. {\it Biometrika} {\bf 84} 443--454.

\par\noindent\hangindent2.3em\hangafter 1
Uddin, N. and Morgan, J. P. (1997b) Universally optimal designs with blocksize p $\times$ 2 and correlated observations. {\it Ann. Statist.} {\bf 25} 1189--1207 

\par\noindent\hangindent2.3em\hangafter 1
Williams, E. R., John, J. A. and Whitaker, D. (2006). Construction of resolvable spatial row-column designs. {\it Biometrics} {\bf 62} 103--108.

\par\noindent\hangindent2.3em\hangafter 1
Yang, M. (2010). On the de la Garza phenomenon. {\it Ann. Statist.} {\bf 38} 2499--2524. 

\par\noindent\hangindent2.3em\hangafter 1
Yang, M. and Stufken, J. (2009). Support points of locally optimal designs for nonlinear models
with two parameters. {\it Ann. Statist.} {\bf 37} 518--541. 

\par\noindent\hangindent2.3em\hangafter 1
Yang, M. and Stufken, J. (2012)Identifying locally optimal designs for nonlinear models: A simple extension with profound conarrays. {\it Ann. Statist.} {\bf 40} 1665--1681

\par\noindent\hangindent2.3em\hangafter 1
Zheng, W. (2013). Universally optimal crossover designs under subject dropout. {\it Ann. Statist.} {\bf 41} 63--90.

\par\noindent\hangindent2.3em\hangafter 1
Zheng, W. (2015). Universally optimal designs for two interference models. {\it Ann. Statist.} {\bf 43} 501--518.

\par\noindent\hangindent2.3em\hangafter 1
Zheng, W., Ai, M. and Li, K. (2017). Identification of universally optimal circular designs for the interference model. {\it Ann. Statist.} Preprint.

\end{document}